\begin{document}

\begin{frontmatter}
\title{Incidence coloring of graphs with high maximum average degree \footnote{Work partially supported by the ANR Grant EGOS (2012-2015) 12 JS02 002 01.}
}
\author[LIRMM]{Marthe Bonamy}
\author[LaBRI]{Herv\'e Hocquard}
\author[LIFORCE]{Samia Kerdjoudj}
\author[LaBRI]{Andr\'e Raspaud}
\address[LIRMM]{Universit\'e Montpellier 2 - LIRMM 161 rue Ada 34095 Montpellier, France}
\address[LaBRI]{LaBRI (Universit\'e de Bordeaux), 351 cours de la Lib\'eration, 33405 Talence Cedex, France}
\address[LIFORCE]{LIFORCE, Faculty of Mathematics, USTHB, BP 32 El-Alia, Bab-Ezzouar 16111, Algiers, Algeria}

\begin{abstract}
An incidence of an undirected graph G is a pair $(v,e)$ where $v$ is a vertex of $G$ and $e$ an edge of $G$ incident with $v$. Two incidences $(v,e)$ and $(w,f)$ are adjacent if one of the following holds: (i) $v = w$, (ii) $e = f$ or (iii) $vw = e$ or $f$. An incidence coloring of $G$ assigns a color to each incidence of $G$ in such a way that adjacent incidences get distinct colors. 
In 2005, Hosseini Dolama \emph{et al.}~\citep{ds05} proved that every graph with maximum average degree strictly less than $3$ can be incidence colored with $\Delta+3$ colors. Recently, Bonamy \emph{et al.}~\citep{Bonamy} proved that every graph with maximum degree at least $4$ and with maximum average degree strictly less than $\frac{7}{3}$ admits an incidence $(\Delta+1)$-coloring. In this paper we give bounds for the number of colors needed to color graphs having maximum average degrees bounded by different values between $4$ and $6$. In particular we prove that every graph with maximum degree at least $7$ and with maximum average degree less than $4$ admits an incidence $(\Delta+3)$-coloring. This result implies that every triangle-free planar graph with maximum degree at least $7$ is incidence $(\Delta+3)$-colorable. We also prove that every graph with maximum average degree less than 6 admits an incidence $(\Delta + 7)$-coloring. More generally, we prove that $\Delta+k-1$ colors are enough when the maximum average degree is less than $k$ and the maximum degree is sufficiently large.
 \end{abstract}

 \begin{keyword}
Graph coloring -  Incidence coloring - Maximum average degree - Incidence chromatic number
 \end{keyword}

\end{frontmatter}

\newenvironment{proof}{\par \noindent \textbf{Proof} \\}{\hfill$\Box$}

\newtheorem{corollary}{Corollary}
\newtheorem{definition}{Definition}
\newtheorem{question}{Question}
\newtheorem{problem}{Problem}
\newtheorem{proposition}{Proposition}
\newtheorem{theorem}{Theorem}
\newtheorem{lemma}{Lemma}
\newtheorem{conjecture}{Conjecture}
\newtheorem{sketch}{Sketch of proof}
\newtheorem{observation}{Observation}
\newtheorem{remark}{Remark}
\newtheorem{claim}{Claim}
\newtheorem{example}{Example}

\section{Introduction}

In the following we only consider simple, non-empty connected graphs. In a graph $G=(V,E)$, an \emph{incidence} is an edge $e$ coupled with one of its two extremities, denoted by $(u,uv)$ or $(v,uv)$. In other words, incidences of $G$ are in natural bijection with edges in the graph $G_s$ obtained from $G$ by subdividing every edge once.

The set of all incidences in $G$ is denoted by $I(G)$, where
\begin{center}
 $I(G)=\{(v,e)\in V(G)\times E(G): \mbox{edge $e$ is incident to $v$}\}$.
\end{center}
Two incidences $(u,e)$ and $(v,f)$ are adjacent if one of the following holds :
\begin{center}
$i)$ $u=v$, $ii)$ $e=f$ and $iii)$ the edge $uv=e$ or $uv=f$.
\end{center}


A \emph{strong incidence of} a vertex $u$ is an incidence $(u,uv)$ for some $v$. A \emph{weak incidence of} a vertex $u$ is an incidence $(v,uv)$ for some $v$. In both cases, a strong or weak incidence of $u$ can be referred to as an \emph{incidence of} $u$.

A proper \emph{incidence coloring} of $G$ is a coloring of the incidences in such a way that for every vertex $u$, a strong incidence of $u$ does not receive the same color as any other incidence of $u$. 

This corresponds to an edge coloring of $G_s$ such that two incident edges receive different colors, and no edge is incident with two edges of the same color. We say that $G$ is \emph{incidence $k$-colorable} if it can be properly incidence colored using only integers between $1$ and $k$. Note that this definition allows for a non-integer value of $k$, though $\lfloor k \rfloor$ could be equivalently considered. We denote by $\chi_i(G)$ the {\it incidence chromatic number} of $G$, which is the smallest integer $k$ such that $G$ is incidence $k$-colorable. The notion of incidence coloring was first introduced by Brualdi and Massey~\cite{bm93}. And they posed the Incidence Coloring Conjecture, which states that: 

\begin{conjecture}[\label{conjic}Brualdi and Massey~~\cite{bm93}]
For every graph $G$, $\chi_i(G)\leq\Delta(G)+2$.
\end{conjecture}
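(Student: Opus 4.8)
To attack Conjecture~\ref{conjic} I would argue by minimal counterexample, combining a local analysis of the incidence structure with a discharging argument. Encode an incidence coloring as a proper coloring of the incidence graph $\mathcal{I}(G)$, whose vertices are the incidences of $G$ and whose edges join adjacent incidences. The key local observation is that at every vertex $u$ the $\deg(u)$ strong incidences $(u,uw)$ form a clique and each is adjacent to all the weak incidences $(w,uw)$ of $u$, whereas the weak incidences of $u$ are pairwise non-adjacent. Thus $\Delta+2$ colors leave, at each vertex, a slack of two over the size of this local clique, and the whole difficulty is to propagate a globally consistent assignment of the freely repeatable weak-incidence colors.

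\smallskip
\noindent First I would show that a minimal counterexample $G$ has large minimum degree: a vertex $u$ of small degree can be deleted, the rest colored by minimality, and the incidences of $u$ reinserted, since each such incidence sees at most $2\deg(u)+\Delta-2$ already-colored neighbors while the weak incidences at $u$ may reuse colors. Building on this, I would catalog a family of reducible configurations --- degree-$1$ vertices, adjacent pairs of low-degree vertices, and more elaborate sparse neighborhoods --- proving in each case that a $(\Delta+2)$-coloring of a smaller graph extends, possibly after a bounded amount of Vizing-style recoloring along alternating chains of incidences. The slack of two colors is exactly what should make these local extensions go through.

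\smallskip
\noindent With a sufficiently rich family of forbidden configurations in hand, I would assign each vertex $v$ an initial charge $\deg(v)$ and redistribute charge from high-degree to low-degree vertices through discharging rules tailored to the excluded configurations, forcing the average degree of $G$ below a threshold incompatible with $G$ being a counterexample. This is precisely the mechanism behind the bounded maximum-average-degree results quoted above, and the plan is to push it to the setting with no density hypothesis.

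\smallskip
\noindent The main obstacle, and where I expect the argument to be genuinely hard, is the dense regime. Discharging turns local reducibility into a global sparsity conclusion, but Conjecture~\ref{conjic} imposes no bound on the maximum average degree: a graph may be as dense as its maximum degree allows while containing no sparse reducible configuration. There the conflict set of a single incidence has size about $3\Delta$, so the two-color slack must be recovered entirely from long-range structure rather than from any local deficiency, and these global constraints have to be met simultaneously at every incidence. I would therefore expect the decisive step to require either a probabilistic ingredient controlling the defect of a random partial coloring or an algebraic handle on the conflict digraph, and it is precisely in pinning the slack down to exactly two colors, uniformly over all graphs, that the essential difficulty lies.
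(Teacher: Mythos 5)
You are attempting to prove a statement that is false, so no amount of refinement of this plan can succeed. Conjecture~\ref{conjic} is stated in the paper only as a historical item: it was disproved by Guiduli~\cite{gui} in 1997, who observed that incidence coloring is a special case of the directed star arboricity framework of Algor and Alon~\cite{alg} and showed that Paley graphs have incidence chromatic number at least $\Delta+\Omega(\log\Delta)$. In particular there exist graphs with $\chi_i(G)>\Delta(G)+2$, the ``minimal counterexample'' you posit genuinely exists, and the global contradiction your discharging argument is supposed to produce can never materialize. The paper accordingly contains no proof of this statement, only the citation of its refutation together with Guiduli's complementary upper bound $\chi_i(G)\le\Delta(G)+O(\log\Delta(G))$, which is asymptotically tight against the Paley lower bound.

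It is worth noting that your own diagnosis of where the argument gets hard is exactly where it dies. Discharging converts local reducibility into a conclusion about sparsity, which is why every positive result of this type --- including all of Theorem~\ref{SHAmad} in this paper --- carries a hypothesis of the form $\mathrm{mad}(G)<k$ (or planarity, girth, etc.): the method can only ever show that a counterexample is denser than some threshold, never rule it out outright. Paley graphs are dense, quasi-random, and contain none of the sparse reducible configurations you catalog, so the ``dense regime'' you flagged as requiring a probabilistic or algebraic ingredient is not merely difficult, it is populated by counterexamples; the two-color slack provably cannot be recovered there. The salvageable content of your proposal is the sparse half: minimal-counterexample extension lemmas of the kind you sketch, combined with discharging, are precisely how the paper proves bounds such as $\chi_i(G)\le\Delta(G)+3$ for $\mathrm{mad}(G)<4$ and $\Delta(G)\ge 7$, and how the earlier results $\chi_i(G)\le\Delta(G)+2$ for $\mathrm{mad}(G)<3$, $\Delta(G)\ge 5$, and $\chi_i(G)=\Delta(G)+1$ for $\mathrm{mad}(G)<\frac{7}{3}$, $\Delta(G)\ge 4$, were obtained --- but always under an explicit density hypothesis that your plan proposes, impossibly, to discard.
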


When Brualdi and Massey defined this variant of coloring, they also provided tight bounds for its corresponding chromatic number, as follows.

\begin{theorem}[Brualdi and Massey~\cite{bm93}]\label{borne}
For every graph $G$, $\Delta(G)+1 \leq\chi_i(G)\leq 2\Delta(G)$.
\end{theorem}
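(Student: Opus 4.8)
The plan is to treat the two inequalities separately, the lower bound being immediate and the upper bound requiring an explicit colouring scheme.

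For the lower bound $\chi_i(G)\geq\Delta(G)+1$, I would exhibit a set of $\Delta+1$ pairwise adjacent incidences. Fix a vertex $v$ of maximum degree $\Delta$ with neighbours $u_1,\dots,u_\Delta$. The $\Delta$ strong incidences $(v,vu_1),\dots,(v,vu_\Delta)$ are pairwise adjacent by condition (i). Adding the single weak incidence $(u_1,u_1v)$, one checks that it is adjacent to $(v,vu_1)$ by condition (ii) and to each $(v,vu_j)$ with $j\neq 1$ by condition (iii), since the edge joining the two vertices is $u_1v$, which is precisely the edge of the first incidence. These $\Delta+1$ incidences are therefore pairwise adjacent and must receive pairwise distinct colours, so at least $\Delta+1$ colours are necessary.

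For the upper bound I would first reformulate an incidence colouring as a colouring of the \emph{darts} of $G$, where a dart is a strong incidence $(v,vu)$ viewed as an oriented half-edge at $v$: a colouring is proper exactly when, at every vertex $v$, the darts emanating from $v$ receive pairwise distinct colours and none of these colours is used by a dart pointing towards $v$. The core construction combines a proper edge colouring with an auxiliary orientation. Take a proper edge colouring $\phi$ of $G$ and fix an arbitrary orientation of the edges; on an edge $e$ oriented from its tail to its head, colour the dart at the tail with $\phi(e)$ taken from a first palette, and the dart at the head with $\phi(e)$ shifted into a disjoint second palette. Because $\phi$ assigns distinct values to the edges around each vertex, at every vertex the ``forward'' and ``backward'' colours split according to the orientation so that the darts out of $v$ are pairwise distinct and automatically avoid the darts into $v$; checking these two conditions palette by palette is routine.

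The main obstacle is the exact number of colours. Used naively with Vizing's theorem, the two palettes each have size $\Delta+1$, which yields $2\Delta+2$ colours rather than the claimed $2\Delta$. The delicate step is therefore to recover the exact constant, that is, to eliminate the two superfluous colours arising from the $(\Delta+1)$-st edge colour. I would exploit the slack that every vertex is incident with at most $\Delta$ edges, so that at each vertex at least one colour of each palette is unused; this room should allow the two palettes to be partially merged, or the few edges receiving colour $\Delta+1$ to be recoloured locally, so that only $2\Delta$ colours survive while both dart conditions are preserved. Controlling this economy globally, rather than independently at each vertex, is where the real work lies.
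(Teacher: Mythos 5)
Your lower-bound argument is correct and is the standard one: the $\Delta$ strong incidences of a maximum-degree vertex $v$ are pairwise adjacent by condition (i), and the weak incidence $(u_1,u_1v)$ is adjacent to $(v,vu_1)$ by (ii) and to every $(v,vu_j)$, $j\neq 1$, by (iii), so these $\Delta+1$ incidences must receive pairwise distinct colours. Note that the paper offers no proof of Theorem~\ref{borne} at all --- it is quoted from Brualdi and Massey~\cite{bm93} --- so your proposal must stand on its own, and on its own it is incomplete: the upper bound has a genuine gap, and it sits exactly where the content of the theorem lies. Your two-palette construction is sound (for any proper edge colouring and any orientation, the out-darts at a vertex get distinct colours because the edge colouring is proper, and an out-dart can never clash with an in-dart because the orientation bit separates the palettes), but it proves only $\chi_i(G)\le 2\chi'(G)\le 2\Delta(G)+2$. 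The descent from $2\Delta+2$ to $2\Delta$ is precisely the step you defer as ``where the real work lies,'' and the slack you invoke does not produce it by itself: the bound $2\Delta$ is tight already for $C_5$, where $\chi_i(C_5)=4$ while your scheme spends $2\chi'(C_5)=6$ colours, so colours must genuinely be reassigned, not merely observed to be locally unused.

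For what it is worth, your second suggested route (recolouring the edges of colour $\Delta+1$) can be completed, and contrary to your closing worry it is purely local, not global. The edges coloured $\Delta+1$ by Vizing form a matching $M$; colour all incidences of $G\setminus M$ by your scheme over the palette $\{1,\dots,\Delta\}\times\{1,2\}$. Two observations finish the job. First, incidences of distinct edges of $M$ are never adjacent: their four endpoints are pairwise distinct, so none of conditions (i)--(iii) can hold, and the edges of $M$ may be handled independently. Second, for $uv\in M$ each endpoint misses some colour of $\{1,\dots,\Delta\}$ on its incident edges, say $a_u$ at $u$ and $a_v$ at $v$. Colour $(u,uv)$ with $(a_u,s)$: the first coordinate already avoids every incidence of every edge at $u$, in both palettes; the only remaining conflict is with the strong incidence of $v$ on the (at most one) edge at $v$ coloured $a_u$, and it is avoided by choosing the bit $s$ opposite to the one that incidence carries. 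Colour $(v,uv)$ symmetrically, taking opposite bits if $a_u=a_v$. With such an argument spelled out, your plan yields the theorem; as written, it establishes only the weaker bound $\chi_i(G)\le 2\Delta(G)+2$.
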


However, in 1997, by observing that the concept of incidence coloring is a particular case of directed star arboricity introduced by Algor and Alon~\cite{alg}, Guiduli~\cite{gui} disproved the Incidence Coloring Conjecture showing that Paley graphs have an incidence chromatic number at least $\Delta+ \Omega(\log\Delta)$. He also improved the upper bound proposed by Brualdy and Massey in Theorem~\ref{borne}.

\begin{theorem}[Guiduli~\cite{gui}]
For every graph $G$, $\chi_i(G)\leq \Delta(G)+ o(\log\Delta(G))$.
\end{theorem}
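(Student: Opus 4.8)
The statement is a known result, and the plan is to recover Guiduli's bound $\chi_i(G)\le \Delta(G)+O(\log\Delta(G))$ through the equivalence with directed star arboricity together with the probabilistic method. First I would make the translation of the excerpt fully explicit. Let $D$ be the symmetric digraph obtained from $G$ by replacing each edge $uv$ with the two arcs $u\to v$ and $v\to u$, and identify the strong incidence $(u,uv)$ with the arc $u\to v$. Under this bijection the two defining requirements of a proper incidence coloring become: (a) the arcs leaving any fixed vertex receive pairwise distinct colors, and (b) no directed path $v\to u\to w$ of length two is monochromatic. A color class satisfying (a) and (b) is exactly a directed star forest (each vertex has out-degree at most one in the class, and every arc points to a sink), so $\chi_i(G)$ equals the directed star arboricity $dst(D)$, and the maximum out-degree of $D$ is $\Delta(G)$. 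It therefore suffices to prove $dst(D)\le \Delta+O(\log\Delta)$, the bound of Algor and Alon~\cite{alg} invoked by Guiduli~\cite{gui}.

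For the probabilistic core I would use a palette of $k=\Delta+t$ colors with $t=\Theta(\log\Delta)$, split into $\Delta$ base colors and a reservoir of $t$ extra colors. The natural first attempt is to color, independently at each vertex $v$, the out-arcs of $v$ by a uniformly random injection into $[k]$; this guarantees (a) for free, so that the only bad events are the monochromatic length-two paths forbidden by (b). Each such event has probability about $1/k\le 1/\Delta$, and one checks that its dependency degree is of order $\Delta^2$, since a path $v\to u\to w$ shares randomness only with paths sharing a vertex with $\{v,u\}$. Unfortunately $e\,p\,(D+1)$ is then of order $\Delta$, so the Lov\'asz Local Lemma does not apply directly; merely enlarging the palette by $O(\log\Delta)$ changes $p$ by a negligible factor and does not help. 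This is the crux of the argument and the place where the additive logarithm genuinely enters.

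The way around it, following Algor and Alon, is to make the bad-event probabilities drop geometrically in the number of reserve colors rather than linearly in $k$. Concretely, I would first fix a base coloring respecting (a), for instance by decomposing the $\Delta$-regularized digraph into spanning $1$-regular subdigraphs (cycle covers, obtained from an edge coloring of the bipartite tails/heads representation) and coloring along them, and then, for each arc independently, re-randomize its color into a uniformly chosen reserve color with a small fixed probability, subject to preserving (a). A length-two path now fails only if a specific coincidence among these independent re-randomizations occurs, and by concentrating on the reserve block one can bound the probability of each resulting bad event by $\exp(-\Omega(t))=\Delta^{-\Omega(1)}$ while keeping the dependency degree polynomial in $\Delta$. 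Choosing the implied constant in $t=\Theta(\log\Delta)$ large enough then makes $e\,p\,(D+1)\le 1$, so the Local Lemma yields a coloring with no monochromatic directed $2$-path, hence a proper incidence coloring with $\Delta+O(\log\Delta)$ colors. The main obstacle is precisely this second step: engineering the randomized repair so that the failure probabilities become exponentially small in $t$ without letting the dependencies blow up, which is exactly what forces the additive $O(\log\Delta)$ and rules out an $O(1)$ bound.
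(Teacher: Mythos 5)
First, a caveat: the paper does not prove this statement at all --- it is quoted from Guiduli~\cite{gui} --- so there is no ``paper proof'' to compare against, and your argument must be judged on its own terms. (Note also that the bound as printed, $\Delta(G)+o(\log\Delta(G))$, is a typo for $\Delta(G)+O(\log\Delta(G))$: Guiduli proves $\chi_i(G)\le\Delta+20\log\Delta+84$, and a general $o(\log\Delta)$ upper bound would contradict the Paley-graph lower bound $\Delta+\Omega(\log\Delta)$ recalled in the same paragraph of the paper; your proposal rightly targets $O(\log\Delta)$.) Your first step is sound: identifying incidences with arcs of the symmetric digraph $D$ and color classes with directed star forests, so that $\chi_i(G)=dst(D)$, is exactly Guiduli's observation, and your diagnosis of why the one-shot Local Lemma application fails is also correct.

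The gap is in the repair step, and it is not a fixable detail. Consider a directed $2$-path $x\to w\to v$ whose two arcs lie on the same cycle of the same $1$-regular base class: it is monochromatic in the base coloring, so it remains bad unless at least one of its two arcs is re-randomized. With re-randomization probability $p_0$ bounded away from $1$ (your ``small fixed probability''), this bad event has probability at least $(1-p_0)^2=\Omega(1)$, not $\exp(-\Omega(t))$, and every cycle contributes as many such events as it has arcs, so no choice of the constant in $t=\Theta(\log\Delta)$ can rescue the Local Lemma. (The clause ``subject to preserving (a)'' is itself problematic: it is a global conditioning that destroys the independence you invoke, and with $\Delta$ out-arcs per vertex re-randomized into only $t$ reserve colors, collisions at a vertex are essentially forced unless $p_0\lesssim t/\Delta$, making the repair even sparser.) Worse, the strategy is impossible for counting reasons, independent of any probabilistic analysis. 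In any valid final coloring, the arcs of a directed cycle of length $m$ that keep that cycle's base color must be pairwise non-consecutive, hence at most $\lfloor m/2\rfloor$ of them; so at least $n\Delta/2$ arcs in total must abandon their base colors, where $n$ is the number of vertices of the regularized graph. But a directed star forest on $n$ vertices contains at most $n$ arcs, so your $t$ reserve classes can absorb at most $tn$ arcs, forcing $t\ge\Delta/2$. A cycle-cover base coloring that is kept ``mostly intact'' can therefore never be completed with only $O(\log\Delta)$ extra colors; the base object itself is wrong.

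For contrast, the actual Algor--Alon/Guiduli argument places the randomness at the vertices, not on the arcs: in outline, each vertex $v$ receives a random small palette $S(v)\subseteq[\Delta+t]$ of size $\Theta(\log\Delta)$ reserved for its weak incidences (the arcs entering $v$); the Local Lemma is applied to coverage events guaranteeing a Hall-type condition for the sets $S(u)\setminus S(v)$, $u\in N(v)$; and the coloring is then completed deterministically, vertex by vertex, by a system of distinct representatives, the arc $v\to u$ receiving a color of $S(u)\setminus S(v)$ distinct from those of the other out-arcs of $v$. There the bad events are large-deviation events for unions of random sets, which genuinely have probability $\Delta^{-\Omega(1)}$ --- precisely what your arc-level repair scheme cannot achieve.
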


For technical purpose, we use the stronger notion of incidence coloring introduced by Hosseini Dolama, Sopena and Zhu in 2004~\cite{hdsz04}, defined as follows. An incidence $(k,\ell)$-coloring of $G$ is an incidence $k$-coloring such that for every vertex $u$, at most $\ell$ different colors can appear on weak incidences of $u$. In particular, one can note that an incidence $(k,1)$-coloring of a graph is actually tantamount to a square coloring of it, that is, a proper coloring of its vertices with the additionnal property that no vertex can have two neighbors with the same color. Indeed, we can consider the unique color used on the weak incidences of a vertex to be assigned to that vertex, and conversely. Note again that the notion of incidence $(k,\ell)$-coloring holds for non-integer values of $k$ and $\ell$.

Let ${\rm mad}(G)=\max\left\{\frac{2|E(H)|}{|V(H)|},\;H \subseteq
G\right\}$ be the maximum average degree of the graph $G$, where
$V(H)$ and $E(H)$ are the sets of vertices and edges of $H$,
respectively. This is a conventional measure of sparsness of arbitrarily graphs (not necessary planar). For more details on this invariant see \cite{Toft} where properties of the maximum degree are exhibited and where it is proved that maximum average degree may be computed by a polynomial algorithm. Results linking maximum average degree and incidence coloring date back to 2005, where Hosseini Dolama and Sopena~\cite{ds05} started looking for such relationships in the case of graphs with low maximum average degree (\emph{i.e.} not only bounded, but bounded by a small constant). However, earlier theorems have implications on graphs with bounded maximum average degree (\emph{i.e.} bounded by any constant).
\medskip

\begin{theorem}\cite{hdsz04}\label{th:degen}
Let $k \in \mathbb{N}$, and $G$ be a $k$-degenerate graph. Then $G$ is incidence $(\Delta(G)+2k-1,k)$-colorable.
\end{theorem}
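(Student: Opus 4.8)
The plan is to proceed by induction on $|V(G)|$ along a degeneracy ordering. Since $G$ is $k$-degenerate, its vertices can be ordered $v_1,\ldots,v_n$ so that each $v_i$ has at most $k$ neighbours among $\{v_{i+1},\ldots,v_n\}$; equivalently, building $G$ up one vertex at a time in reverse order, every newly added vertex $v$ has a set of at most $k$ already-present neighbours, which I call its back-neighbours $u_1,\ldots,u_d$ with $d\le k$. Starting from the trivial coloring of a single vertex, I maintain throughout a proper incidence $(\Delta+2k-1,k)$-coloring of the current subgraph, recorded together with the set $U(u)$ of colours actually appearing on the weak incidences of each vertex $u$. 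The induction hypothesis is that $|U(u)|\le k$ and that the colours of the strong incidences of $u$ are disjoint from $U(u)$ — which is exactly what properness at $u$ forces.

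When $v$ is added, the only new incidences are, for each back-neighbour $u_j$, the strong incidence $\alpha_j=(v,vu_j)$ of $v$ (a weak incidence of $u_j$) and the incidence $\beta_j=(u_j,vu_j)$, which is a strong incidence of $u_j$ and a weak incidence of $v$. I colour the $\alpha_j$ first. Each $\alpha_j$ must avoid the previously coloured $\alpha_{j'}$ ($\le d-1$ colours) and the strong incidences of $u_j$ ($\le\Delta-1$ colours), and it must keep $U(u_j)$ within $k$ colours. If $|U(u_j)|<k$, at least $(\Delta+2k-1)-(\Delta+d-2)\ge 1$ choices remain; if $|U(u_j)|=k$, I must pick $\alpha_j\in U(u_j)$, but the strong incidences of $u_j$ are disjoint from $U(u_j)$, so only the $\le d-1$ other $\alpha$'s are forbidden inside $U(u_j)$, again leaving at least $k-(d-1)\ge 1$ choices. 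Adding $\alpha_j$ to $U(u_j)$ keeps its size at most $k$.

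Next I colour the $\beta_j$. One checks from the adjacency definition that $\beta_j$ is adjacent to every $\alpha_{j'}$ — the connecting edge $u_jv$ equals the edge $vu_j$ of $\beta_j$ — whereas two incidences $\beta_j,\beta_{j'}$ are non-adjacent, so the $\beta$'s may freely repeat colours. As a strong incidence of $u_j$, $\beta_j$ must avoid the $\le\Delta-1$ other strong incidences of $u_j$ and the colours of $U(u_j)$; as a weak incidence of $v$ it must avoid the $d$ colours $\alpha_1,\ldots,\alpha_d$. The crucial saving is that $\alpha_j\in U(u_j)$, so $U(u_j)\cup\{\alpha_1,\ldots,\alpha_d\}$ has at most $k+d-1\le 2k-1$ colours; together with the $\le\Delta-1$ strong incidences this forbids at most $\Delta+2k-2$ colours, leaving at least one. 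Finally, the weak incidences of $v$ created so far are exactly the $\le d\le k$ colours of the $\beta_j$, so $U(v)$ has size at most $k$, and the strong incidences of $v$ (the $\alpha_j$) avoid it by construction, re-establishing the invariant.

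The main obstacle is this last step: the naive count for $\beta_j$ forbids up to $\Delta+2k-1$ colours, exactly exhausting the palette, so the argument only goes through because $\alpha_j$ lies in $U(u_j)$ and is therefore counted only once. Getting the bookkeeping of these overlaps right — and confirming the precise adjacencies among the $\alpha_j$ and the $\beta_j$, which decide whether a given pair must receive distinct colours — is where the care is needed; everything else is routine greedy counting.
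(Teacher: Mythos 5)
The paper does not prove this statement at all --- it is quoted as Theorem~\ref{th:degen} directly from the cited reference \cite{hdsz04} --- so there is no in-paper proof to compare against. Your argument is correct: the induction along a degeneracy ordering, the invariant that strong incidences of each vertex avoid its weak-incidence palette $U(u)$, and the key bookkeeping point that $c(\alpha_j)\in U(u_j)$ (so that the forbidden set for $\beta_j$ has size at most $\Delta+2k-2$) all check out, and this is essentially the same greedy inductive scheme used in the original proof of Hosseini Dolama, Sopena and Zhu.
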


As a corollary, it holds immediately that for every integer $k$, a graph $G$ with ${\rm mad}(G)<k$, being $(k-1)$-degenerate, is $(\Delta(G)+2k-3,k-1)$-colorable. By allowing a lower bound on the maximum degree $\Delta(G)$, we seek to reduce the number of colors necessary for an incidence coloring and we prove the following result.

\begin{theorem}\label{th:BH1}
Let $k \in \mathbb{N}$, and $G$ be a graph with maximum degree $\Delta(G)$ and maximum average degree ${\rm mad}(G)<k$.
\begin{enumerate}
\item If $\Delta(G) \ge \frac{k^2}{2}+\frac{3k}{2}-2$, then $G$ is incidence $(\Delta(G)+k-1,k-1)$-colorable.
\item For all $\alpha>0$, if $\Delta(G) \ge \frac{3\alpha+1}{2 \alpha} k-2$, then $G$ is incidence $(\Delta(G)+(1+\alpha) k-1,(1+\alpha) k-1)$-colorable.
\end{enumerate}
\end{theorem}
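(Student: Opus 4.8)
The plan is to argue by contradiction via a minimal counterexample, using the strengthened notion of incidence $(N,\ell)$-coloring throughout, with $N=\Delta+k-1$ and $\ell=k-1$ for part~1, and $N=\Delta+(1+\alpha)k-1$, $\ell=(1+\alpha)k-1$ for part~2. In both cases $N-\Delta=\ell$, and this identity is what will drive every count. I would fix the color budget $N$ and the threshold on the maximum degree once and for all, then prove by induction on $|V(G)|$ that every graph with ${\rm mad}<k$ and maximum degree at most that threshold admits an incidence $(N,\ell)$-coloring; applying this with the threshold equal to $\Delta(G)$ yields the statement. Since ${\rm mad}(G)<k$ forces $G$ to be $(k-1)$-degenerate, there is a vertex $v$ with $d(v)=d\le k-1$; the strategy is to delete $v$, color $G-v$ by the induction hypothesis, and extend.

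The extension splits into two phases. First I would color the weak incidences $A_i=(u_i,u_iv)$ of $v$ (the strong incidences of the neighbours $u_i$): each $A_i$ is adjacent only to incidences of its own $u_i$, so its palette has size at least $N-(d(u_i)-1)-\ell=\Delta-d(u_i)+1\ge 1$; moreover the $A_i$ are pairwise non-adjacent and there are $d\le\ell$ of them, so they can be colored independently without violating the weak bound at $v$. The delicate phase is coloring the strong incidences $B_i=(v,vu_i)$, each of which is a weak incidence of $u_i$. Here the crucial device is to choose the color of $B_i$ inside the palette $P_i$ that preserves the $(N,\ell)$-property at $u_i$: either reuse one of the at most $\ell$ weak colors already present at $u_i$, or, if fewer than $\ell$ appear there, take any color off the strong incidences of $u_i$. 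Because the weak and strong colors of a vertex are disjoint, this automatically avoids all $d(u_i)$ strong incidences of $u_i$ at once, and one checks $|P_i|\ge\ell$ in every case. It then remains to pick the $B_i$ pairwise distinct and off the at most $d$ colors used by the $A_i$; this is a system of distinct representatives, for which I would invoke Hall's theorem.

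The main obstacle is verifying Hall's condition for the sets $P_i$. Using that each $P_i$ is either the complement of the strong colors of $u_i$, so that $|\bigcup_{i\in T}P_i|\ge N-\min_{i\in T}d(u_i)$, or a block of $\ell$ weak colors, one sees that the condition can fail only when $v$ has an unusually large number of neighbours of near-maximum degree or of saturated weak palette. This is exactly where the hypothesis on $\Delta$ enters and where a discharging argument becomes necessary: assigning to each vertex the charge $d(x)$ and discharging from heavy vertices to their low-degree neighbours, I would show that a vertex $v$ with too many heavy neighbours receives enough charge to reach $k$, whereas a vertex with few heavy neighbours is reducible by the Hall argument above; since ${\rm mad}(G)<k$ the total charge stays below $k|V(G)|$, producing the contradiction. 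The thresholds $\tfrac{k^2}{2}+\tfrac{3k}{2}-2$ and $\tfrac{3\alpha+1}{2\alpha}k-2$ should be precisely the values of $\Delta$ for which the worst case $d=k-1$ balances the per-neighbour deficit $\ell-2d+1$ in the Hall count against the charge the heavy neighbours must donate; the quadratic term in part~1 comes from summing this deficit over the up to $k-1$ neighbours of a heaviest light vertex, while the factor $\tfrac{1}{2\alpha}$ in part~2 reflects the larger slack $\ell=(1+\alpha)k-1$. Making the two phases simultaneously satisfiable, and tuning the discharging rules so that the required bound on $\Delta$ is exactly the stated one, is the technical heart of the proof.
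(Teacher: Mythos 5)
Your overall architecture is the same as the paper's: a minimal counterexample (equivalently, your induction), a reducibility argument for a low-degree vertex whose ``bad'' neighbours are few, and a degree-based discharging from heavy vertices to light ones, with the stated thresholds on $\Delta$ coming out of the tuning. The gap is in the reducibility step. You colour \emph{all} the incidences $A_i=(u_i,u_iv)$ first, arbitrarily and independently, and then try to complete the $B_i=(v,vu_i)$ by Hall's theorem. As your own analysis notes, the Hall condition can fail not only when many $u_i$ have near-maximum degree, but also when many $u_i$ have \emph{saturated weak palettes}, i.e.\ $|P_i|=\ell$. Saturation, however, is a property of the inherited colouring of $G-v$, not of the graph: a neighbour of degree as small as $\ell+1$ can be saturated. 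Concrete failure with $k=3$, $\ell=2$, $d(v)=2$: let $u_1,u_2$ be light neighbours with weak palettes $\{1,2\}$ and $\{2,3\}$, and suppose the arbitrary first phase sets $A_1=3$, $A_2=1$; then $Q_1=\{1,2\}\setminus\{1,3\}=\{2\}$ and $Q_2=\{2,3\}\setminus\{1,3\}=\{2\}$, so no system of distinct representatives exists. Your dichotomy then breaks down: a vertex all of whose neighbours are light but saturated is not reducible by your Hall argument, yet it also receives no charge, because any discharging rule must be expressed in terms of degrees (saturation does not exist at the discharging stage) and its neighbours are too poor to donate. So the sentence ``a vertex with few heavy neighbours is reducible by the Hall argument above'' is false as written, and the proof has a hole exactly at these vertices.

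The paper closes this hole with an ordering trick that your proposal is missing: the strong incidences of the \emph{light} neighbours (degree at most $\Delta-k+1$) are coloured \emph{last}, not first. Such an incidence must avoid at most $\ell$ colours on weak incidences of $u_i$ and at most $\Delta-k$ colours on its strong incidences, hence keeps at least $k$ admissible colours, so at least $k-d(v)\ge 1$ remain even after all the $(v,vu_i)$ are placed --- regardless of saturation. Only the strong incidences of the heavy neighbours (at most $(1+\alpha)k-d(v)$ of them, each with possibly a single admissible colour) are fixed before the $(v,vu_i)$; the palettes $W_i$ then lose at most $(1+\alpha)k-d(v)$ colours (one fewer when $u_i$ is heavy, since the colour of $(u_i,u_iv)$ lies outside $W_i$), and a greedy completion by decreasing index works --- indeed, with these improved bounds your Hall argument would also go through. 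With that correction your reducible configuration becomes purely degree-based (few neighbours of degree at least $\Delta-k+2$), which is exactly the paper's Lemma, and your discharging plan (together with the observation, which you will need for the stated thresholds, that the counterexample has minimum degree greater than $(1+\alpha)k/2$) then matches the paper's second half.
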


When considering small values of $k$, Theorem~\ref{th:BH1}.1 cannot compete with specific results. 

Let us now discuss some such specific cases. The following results were proved for small values of maximum average degree.\\

\begin{theorem}\label{resume:mad}
$ $
\begin{enumerate}
  \item If $G$ is a graph with $\rm{mad}(G)<3$,  then $\chi_i(G)\leq \Delta(G)+3$.~\cite{ds05}
  \item If $G$ is a graph with $\rm{mad}(G)<3$ and $\Delta(G)\geq5$, then $\chi_i(G)\leq \Delta(G)+2$.~\cite{ds05} 
  \item If $G$ is a graph with $\rm{mad}(G)<\frac{22}{9}$, then $\chi_i(G)\leq \Delta(G)+2$.~\cite{ds05}
  \item If $G$ is a graph with $\Delta(G) \ge 4$ and $\rm{mad}(G)<\frac{7}{3}$, then $\chi_i(G)=\Delta(G)+1$.~\cite{Bonamy}
\end{enumerate}
\end{theorem}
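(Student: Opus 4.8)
The plan is to treat the four items separately, since they ask for increasingly tight conclusions from increasingly sparse graphs. Item~1 I would obtain for free from Theorem~\ref{th:degen}: a graph with $\mathrm{mad}(G)<3$ has every subgraph of average degree below $3$, hence a vertex of degree at most $2$ in every subgraph, so $G$ is $2$-degenerate; applying Theorem~\ref{th:degen} with $k=2$ yields a $(\Delta(G)+3,2)$-coloring, in particular $\chi_i(G)\le\Delta(G)+3$. The same remark shows that the graphs in items~2, 3 and~4 are all $2$-degenerate (since $\tfrac{22}{9}<3$ and $\tfrac{7}{3}<3$), so the whole difficulty lies in shaving the palette from $\Delta+3$ down to $\Delta+2$ (items~2 and~3) or even $\Delta+1$ (item~4) by exploiting the extra sparsity and, in item~2, the hypothesis $\Delta\ge5$.

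For the three remaining items I would run the discharging method on a minimal counterexample, and crucially I would prove the stronger $(k,\ell)$-statement rather than the bare bound, exactly as the $(k,\ell)$-coloring was introduced ``for technical purpose'': carrying a bound $\ell$ on the number of colors appearing on the weak incidences of each vertex is what makes the extension step go through. Concretely, I would fix a target pair $(a,\ell)$ --- say $(\Delta+2,2)$ for items~2 and~3 and $(\Delta+1,2)$ for item~4 --- and let $G$ minimize $|V(G)|+|E(G)|$ among graphs of the prescribed $\mathrm{mad}$ (and, for item~2, maximum degree at least $5$) that fail to be $(a,\ell)$-colorable. Since the class is closed under taking subgraphs and the palette size is pinned to the global $\Delta$, every proper subgraph admits the desired coloring, which I then try to extend.

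The core is the reducibility analysis. First I would rule out pendant vertices and short threads of degree-$2$ vertices by deleting them, coloring the remainder, and recoloring the few freed incidences greedily: the $\ell$-bound guarantees that the weak incidences around each endpoint occupy few colors and hence leave room. More delicate configurations --- a degree-$2$ vertex with a small-degree neighbor, two adjacent light vertices, a vertex with several degree-$2$ neighbors --- are treated the same way, the bookkeeping being a comparison of the colors forbidden on each incidence of the deleted piece with the palette size. Once a sufficient list of reducible configurations is in hand, I would give each vertex the charge $d(v)-t$, where $t$ is the relevant threshold ($3$, $\tfrac{22}{9}$ or $\tfrac{7}{3}$), so that the $\mathrm{mad}$ hypothesis forces a negative total charge; then design rules sending charge from high-degree vertices to their light neighbors and check that, in the absence of every reducible configuration, each vertex ends with non-negative charge --- the desired contradiction. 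For item~4 one also invokes $\chi_i(G)\ge\Delta(G)+1$ (Theorem~\ref{borne}) to upgrade the upper bound into the equality $\chi_i(G)=\Delta(G)+1$.

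The step I expect to be the main obstacle is precisely this matching of the reducible-configuration list to the discharging rules under the minimal palette. For item~4 every extension must succeed with only $\Delta+1$ colors and $\ell=2$ weak colors per vertex, so the admissible configurations are rigid and the rules must be calibrated so that the charge balances exactly at $\tfrac{7}{3}$; any slack in the count either wastes the sparsity or breaks an extension. The quantitative content of items~2 and~3 (why $\Delta\ge5$ suffices at $\mathrm{mad}<3$, and why $\mathrm{mad}<\tfrac{22}{9}$ suffices with no degree hypothesis) is the small-threshold analogue of the bound $\frac{k^2}{2}+\frac{3k}{2}-2$ appearing in Theorem~\ref{th:BH1}: it is the exact point where the incidence count of the worst configuration meets the available palette, and pinning it down is the delicate part of the calculation.
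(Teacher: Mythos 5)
First, be aware that the paper does not prove Theorem~\ref{resume:mad} at all: it is a summary of prior results quoted with citations (items 1--3 are due to Hosseini Dolama and Sopena~\cite{ds05}, item 4 to Bonamy, L\'ev\^eque and Pinlou~\cite{Bonamy}), so the only fair comparison is against those sources and against what the paper itself extracts from its toolkit. Your treatment of item 1 is correct and complete, and it coincides with the paper's own observation: $\mathrm{mad}(G)<3$ forces a vertex of degree at most $2$ in every subgraph, hence $2$-degeneracy, and Theorem~\ref{th:degen} with $k=2$ yields an incidence $(\Delta(G)+3,2)$-coloring; this is exactly the corollary the paper states right after Theorem~\ref{th:degen} (there for general $k$; take $k=3$ in that formulation).

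For items 2--4, however, what you have written is a plan, not a proof, and the gap sits precisely where the content of these theorems lies. You never exhibit a reducible configuration beyond generic mentions of pendant vertices and threads, never state a discharging rule, and never perform the verification that configurations and rules mesh at the specific thresholds --- why $\Delta(G)\ge5$ suffices at $\mathrm{mad}(G)<3$, why $\frac{22}{9}$, why $\frac{7}{3}$. You yourself flag this as ``the main obstacle'', but it is the entire substance of the cited papers; nothing in your sketch shows that your chosen invariants (e.g.\ $(\Delta(G)+1,2)$-colorability for item 4) are even the right ones, nor that the extension steps survive with so few colors. There is also an unaddressed subtlety in your minimality setup: for item 2 the class ``$\Delta\ge5$ and $\mathrm{mad}<3$'' is not closed under taking subgraphs, and the palette $\Delta(G)+2$ shrinks when the maximum degree drops; the present paper handles exactly this issue in its own proofs via Remark~\ref{monotone} and the ``smallest $k\ge\max\{\Delta(G),\cdot\}$'' device, whereas your sketch simply asserts that every proper subgraph is colorable. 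Finally, you miss the route that makes item 4 essentially free given the citation: as the introduction observes, an incidence $(k,1)$-coloring is tantamount to a square ($2$-distance) coloring, so the $2$-distance theorem of~\cite{Bonamy} for $\Delta(G)\ge4$ and $\mathrm{mad}(G)<\frac{7}{3}$ immediately gives $\chi_i(G)\le\Delta(G)+1$, and Theorem~\ref{borne} supplies the matching lower bound; the direct discharging argument you propose would amount to reproving that paper from scratch, with no evidence it goes through.
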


\medskip
Let us recall results concerning the incidence chromatic number of $3$-degenerate graphs.
\begin{theorem}[\label{3degene}Hosseini Dolama and Sopena~\cite{ds05}]
Every $3$-degenerate graph $G$ admits an incidence $(\Delta(G)+4,3)$-coloring.
Therefore, $\chi_i(G)\leq \Delta(G)+4$.
\end{theorem}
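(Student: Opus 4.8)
The plan is to recast an incidence $(\Delta+4,3)$-coloring as a purely combinatorial condition and then build it greedily along a degeneracy ordering. First I would record the following reformulation. For each vertex $u$ let $W(u)$ be the set of colours appearing on the weak incidences of $u$; the $(\cdot,3)$ requirement is exactly $|W(u)|\le 3$. Writing $c(u,uv)$ for the colour of the strong incidence of $u$ on the edge $uv$, the adjacency rules unwind to three clean constraints: (i) $|W(u)|\le 3$; (ii) $c(u,uv)\in W(v)$ and $c(u,uv)\notin W(u)$ for every edge; (iii) the colours $\{c(u,uv):v\sim u\}$ are pairwise distinct. The point worth checking at the outset is that every remaining adjacency is then automatic --- in particular $c(u,uv)\ne c(v,uv)$ holds for free, since $c(v,uv)\in W(u)$ while $c(u,uv)\notin W(u)$, and a strong incidence of $u$ cannot clash with a strong incidence of a neighbour $v$ because it lies in $W(v)$, which no strong incidence of $v$ uses. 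Thus the whole task becomes: assign to each vertex a $3$-set $W(u)\subseteq\{1,\dots,\Delta+4\}$ and, at each vertex $u$, pick pairwise distinct representatives $c(u,uv)\in W(v)\setminus W(u)$, i.e. a system of distinct representatives for the family $\{W(v)\setminus W(u)\}_{v\sim u}$.

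Then I would fix a $3$-degeneracy ordering $v_1,\dots,v_n$ and process it: at step $i$, with back-neighbours $u_1,\dots,u_t$ (so $t\le 3$), I colour exactly the two incidences of each back-edge $v_iu_j$. Maintaining the invariant that at every already-seen vertex the strong incidences are distinct and avoid its $W$-set, the slack is the heart of the count: a vertex of degree $d\le\Delta$ uses $d$ strong colours out of the $\Delta+4-3=\Delta+1$ that avoid its $3$-set, so, writing $\mathrm{Strong}(u_j)$ for the strong-incidence colours already used at $u_j$, the set $F_j:=\{1,\dots,\Delta+4\}\setminus(W(u_j)\cup\mathrm{Strong}(u_j))$ available for a fresh strong incidence of $u_j$ has size at least $\Delta+1-(\deg(u_j)-1)\ge 2$. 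I would choose $a_j\in W(v_i)\cap F_j$ for $(u_j,v_iu_j)$, set $W(v_i):=\{a_1,\dots,a_t\}$ padded to three colours, and choose pairwise distinct $b_j\in W(u_j)\setminus W(v_i)$ for $(v_i,v_iu_j)$; since each $a_j\notin W(u_j)$, at least one colour of each $W(u_j)$ stays outside $W(v_i)$. A forward-looking check shows no choice can block the future, because every vertex keeps at least one spare strong colour, and its later strong incidences are colours picked at subsequent steps from a neighbour's $W$-set while avoiding its own used colours.

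The hard part is the feasibility of this single step when $t=3$ and the back-neighbours have degree close to $\Delta$. There $|F_j|$ can drop to its minimum $2$, so there is almost no freedom in $a_j$, and the distinct representatives $b_j$ must be found from the small sets $W(u_j)\setminus W(v_i)$; Hall's condition can fail only in the rigid configuration where two of these sets collapse to the same singleton (for instance $W(u_1)=\{a_2,a_3,q\}$ and $W(u_2)=\{a_1,a_3,q\}$, forcing $b_1=b_2=q$). I would dispatch this by a finite case analysis on $t\le 3$: first try to choose the $a_j\in F_j$ so that $W(v_i)$ is disjoint from every $W(u_j)$, which makes each $W(u_j)\setminus W(v_i)$ have size $3$ and the $b_j$-system trivially satisfy Hall; when the $F_j$ are too cramped inside $\bigcup_j W(u_j)$ for that to be possible, I would use that $|F_j|=2$ forces $\deg(u_j)=\Delta$ with a saturated palette, and exploit that rigidity by recolouring one already-assigned strong incidence of some $u_j$ (a local swap that remains valid by the spare-colour slack) to free a suitable $a_j$ and break the collapse. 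I expect this tight $t=3$ analysis, rather than the setup or the bookkeeping, to be the only genuinely delicate point; it is exactly where saving the last colour, passing from the generic $\Delta+5$ bound to $\Delta+4$, removes the slack that would otherwise make Hall's condition automatic.
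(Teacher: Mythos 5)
You should first be aware that the paper you are working from does not prove this statement at all: Theorem~\ref{3degene} is imported verbatim from Hosseini Dolama and Sopena~\cite{ds05} and used as a black box (only to derive Corollary~\ref{mad}), so your argument has to stand entirely on its own. Your reformulation (pre-assigned $3$-sets $W(u)$, strong incidences distinct, avoiding $W(u)$, and landing in the neighbour's $W$-set) is correct, and the greedy skeleton along a $3$-degeneracy ordering with the count $|F_j|\ge(\Delta+4)-3-(\deg(u_j)-1)\ge 2$ is sound. The gap sits exactly at the point you yourself flag as delicate, and it is twofold. First, your description of how Hall's condition can fail for the sets $B_j:=W(u_j)\setminus W(v_i)$ is incomplete: besides two equal singletons, Hall also fails when all three sets are contained in a single $2$-set, e.g.\ $B_1=B_2=B_3=\{p,q\}$, which happens when the three $W(u_j)$ share two colours and each third colour is absorbed into $W(v_i)$; your sketched analysis never meets this mode. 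Second, and more seriously, the fallback you invoke for the rigid case --- recolouring an already-assigned strong incidence of some $u_j$, justified ``by the spare-colour slack'' --- does not follow from your invariant. An already-coloured strong incidence of $u_j$ on an edge $u_jw$ is a \emph{weak} incidence of the earlier vertex $w$, so any replacement colour must lie inside the $3$-set $W(w)$; the slack you maintain (at least two colours outside $W(u_j)\cup\mathrm{Strong}(u_j)$) gives no control whatsoever over $W(w)$, and nothing prevents the two colours of $W(w)$ other than the current one from being blocked at $u_j$. So the local swap can simply be unavailable, and the proof does not close. (A smaller slip: padding $W(v_i)$ to three colours \emph{before} choosing the $b_j$ can shrink the sets $B_j$ further; pad only at the end, with colours distinct from the $b_j$.)

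The gap is repairable, but by a different idea than recolouring: one must choose the colours $a_j\in F_j$ more flexibly instead of insisting that $W(v_i)$ avoid every $W(u_j)$. Concretely, for $t=3$: if each $F_j$ contains a colour outside $W(u_{j'})\cup W(u_{j''})$, your disjoint choice works; if, say, only $F_1$ is trapped in $W(u_2)\cup W(u_3)$, picking $a_2,a_3$ outside the other $W$'s gives $|B_1|=3$, $|B_2|,|B_3|\ge2$ and Hall holds; if two lists intersect, taking $a_1=a_2\in F_1\cap F_2$ makes every $|B_j|\ge 2$ and pins down at most one bad choice of $a_3$, which $|F_3|\ge 2$ lets you avoid; and in the last case the $F_j$ are pairwise disjoint and all trapped, which forces $|W(u_1)\cup W(u_2)\cup W(u_3)|\ge 6$, so the $2$-set failure is impossible, while each singleton failure requires $\{a_{j'},a_{j''}\}\subseteq W(u_j)$ for two distinct indices $j$, and a count (at most two such pairs per $W(u_j)$) shows at most $|F_1|+|F_2|+|F_3|<|F_1|\,|F_2|\,|F_3|$ of the available triples can fail. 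So the single-step extension is in fact always feasible with no recolouring; but some argument of this kind is genuinely required, and as written your proposal is a correct framework with a hole at its crux rather than a proof.
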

Since a graph $G$ with $\rm{mad}(G)<4$ is $3$-degenerate, the following corollary can easily be derived from Theorem~\ref{3degene}:

\begin{corollary}\label{mad}
If $G$ is a graph with $\rm{mad}(G)<4$, then $G$ admits an incidence $(\Delta(G)+4,3)$-coloring.
\end{corollary}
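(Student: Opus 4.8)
The plan is to derive the corollary directly from Theorem~\ref{3degene} by establishing that the hypothesis ${\rm mad}(G)<4$ forces $G$ to be $3$-degenerate. Once this is done the conclusion is immediate: Theorem~\ref{3degene} asserts that every $3$-degenerate graph admits an incidence $(\Delta(G)+4,3)$-coloring, so we need only invoke it with $G$ itself playing the role of the $3$-degenerate graph, and the bound $\chi_i(G)\le \Delta(G)+4$ follows as well.

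To prove the degeneracy claim, I would recall that $G$ is $3$-degenerate precisely when every subgraph of $G$ contains a vertex of degree at most $3$. So let $H\subseteq G$ be an arbitrary subgraph. By the definition of the maximum average degree we have $\frac{2|E(H)|}{|V(H)|}\le {\rm mad}(G)<4$, which says exactly that the average degree of $H$ is strictly less than $4$. Since the minimum degree of any graph is at most its average degree, $H$ has a vertex of degree strictly less than $4$; as degrees are integers, this vertex has degree at most $3$. Because $H$ was arbitrary, every subgraph of $G$ has a vertex of degree at most $3$, so $G$ is $3$-degenerate, and Theorem~\ref{3degene} applies.

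There is essentially no obstacle here: the argument is a routine translation between the sparsity parameter ${\rm mad}$ and the combinatorial notion of degeneracy, combined with a direct application of an already-established theorem. The only point requiring a moment's care is the passage from ``average degree strictly less than $4$'' to ``a vertex of degree at most $3$'', which relies on the minimum degree never exceeding the average together with the integrality of degrees; both facts are elementary.
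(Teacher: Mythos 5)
Your proposal is correct and follows essentially the same route as the paper, which derives the corollary in one line by observing that a graph with ${\rm mad}(G)<4$ is $3$-degenerate and then applying Theorem~\ref{3degene}. The only difference is that you spell out the elementary implication from average degree to degeneracy (minimum degree at most average degree, plus integrality), which the paper leaves implicit.
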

In this paper we improve the previous result by showing the following results for low maximum average degree.

\begin{theorem}\label{SHAmad}
Let $G$ be a graph with
\begin{enumerate}
\item \label{T1} $\rm{mad}(G)<4$, then $\chi_i(G)\leq \Delta(G)+3$ for every $\Delta(G)\geq 7$ .
\item \label{T2} $\rm{mad}(G)<\frac{9}{2}$, then $\chi_i(G)\leq \Delta(G)+4$ for every $\Delta(G)\geq 9$.
\item \label{T3} $\rm{mad}(G)<5$, then $\chi_i(G)\leq \Delta(G)+5$ for every $\Delta(G)\geq 9$ or $\Delta(G)\leq 5$.
\item \label{T4} $\rm{mad}(G)<5$, then $\chi_i(G)\leq \Delta(G)+6$ for every $6\leq \Delta(G)\leq 8$.
\item \label{T5} $\rm{mad}(G)<6$, then $\chi_i(G)\leq \Delta(G)+6$ for every $\Delta(G)\geq 12$ or $\Delta(G)\leq 6$.
\item \label{T6} $\rm{mad}(G)<6$, then $\chi_i(G)\leq \Delta(G)+7$ for every $7\leq \Delta(G)\leq 11$.
\end{enumerate}	

\end{theorem}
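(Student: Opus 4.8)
The plan is to prove all six items of Theorem~\ref{SHAmad} by the standard \emph{discharging + reducible configurations} method, proceeding by contradiction. For each item, let $G$ be a minimal counterexample (minimal in the number of edges, say) to the corresponding statement, so that every proper subgraph of $G$ is incidence colorable with the desired number of colors but $G$ itself is not. The sparseness hypothesis ${\rm mad}(G)<k$ (with $k\in\{4,\tfrac92,5,6\}$ depending on the item) means that every subgraph $H$ satisfies $|E(H)|<\tfrac{k}{2}|V(H)|$, and this will be the global ``charge'' we exploit at the end. I will work with the stronger notion of incidence $(c,\ell)$-coloring throughout, because the bounded number $\ell$ of colors appearing on weak incidences is exactly what lets one re-extend a coloring after deleting a low-degree vertex; the earlier results (Theorem~\ref{th:degen}, Corollary~\ref{mad}) are phrased this way for the same reason, and I would keep track of the pair $(\Delta+t,\,\ell)$ matching each item.

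The first and main technical task is a \emph{reducibility} phase: identifying a list of small configurations that cannot appear in $G$. The prototypical argument is this. Suppose $G$ contains a vertex $v$ of small degree $d(v)=j$ whose neighborhood has some controlled structure. Remove $v$ (or a small piece around $v$), apply minimality to get an incidence $(\Delta+t,\ell)$-coloring of the smaller graph, and then count the constraints on the incidences that must be (re)colored when $v$ is put back. A strong incidence $(v,vu)$ is forbidden the colors on the other incidences at $v$, on the incidences at $u$, and (through condition~(iii)) a bounded number of others; a careful tally shows that provided $\Delta$ is large enough and the total palette $\Delta+t$ is wide enough, a valid color always remains, and moreover the $\ell$-bound on weak incidences can be maintained. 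This is where the precise thresholds on $\Delta$ in each item ($\Delta\ge 7$, $\ge 9$, $6\le\Delta\le 8$, etc.) come from: the arithmetic of ``colors available minus colors forbidden $\ge 0$'' is tight, and the small-$\Delta$ regimes ($\Delta\le 5$, $\Delta\le 6$) presumably fall back on, or are combined with, global bounds such as Theorem~\ref{borne} or known exact values for sparse graphs. I would catalog the forbidden configurations as vertices of degree $1,2,\dots$ together with adjacency patterns among their low-degree neighbors (e.g. a $2$-vertex adjacent to a $2$-vertex, a $3$-vertex with two small neighbors, and so on), proving each reducible by the above extension scheme with the counting tuned to the item's value of $t$ and $\ell$.

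The second task is the \emph{discharging} phase. Assign to each vertex $u$ the initial charge $d(u)-k$, so that $\sum_{u}(d(u)-k)=2|E(G)|-k|V(G)|<0$ by the ${\rm mad}$ hypothesis. Then design local discharging rules (typically: each high-degree vertex sends a fixed fraction of charge to each of its small neighbors along edges, possibly with refined rules for the $2$- and $3$-vertices that the reducibility phase has constrained) that redistribute charge without changing the total. The goal is to show that, once all the reducible configurations are excluded, every vertex ends with nonnegative final charge, forcing $\sum_u(\text{final charge})\ge 0$ and contradicting the strictly negative total. Verifying nonnegativity vertex-by-vertex is a routine but case-heavy computation: low-degree vertices must receive enough from their neighbors (guaranteed because their bad neighborhoods are reducible), and high-degree vertices must not give away too much (guaranteed by the gap between their degree and $k$).

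I expect the main obstacle to be the reducibility phase, specifically making the recoloring extension work while simultaneously respecting the $\ell$-bound on weak incidences. Deleting a vertex and greedily recoloring its incidences is easy for the plain incidence constraint, but the constraint ``at most $\ell$ distinct colors on the weak incidences of each neighbor'' couples the choices across vertices and can block a naive greedy completion; handling it usually requires either swapping colors along short alternating structures or choosing the deletion order and the reinserted colors with care so that each affected vertex stays within its $\ell$ budget. Getting the reducible list rich enough that the discharging closes, yet keeping every configuration genuinely reducible under these coupled constraints, is the delicate balance, and the differing $(t,\ell,\Delta)$ regimes across the six items mean this balance must be struck separately (though with a shared template) for each case.
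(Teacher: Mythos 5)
Your proposal correctly identifies the framework the paper itself uses: a minimal counterexample, reducible configurations proven via extension of incidence $(\Delta+t,\ell)$-colorings, discharging from initial charge $d(u)-k$, and a fallback to the bound $\chi_i(G)\le 2\Delta(G)$ of Theorem~\ref{borne} for the small-$\Delta$ regimes. But identifying the framework is where your proposal stops, and everything that actually constitutes the proof is left as a placeholder. You write that you ``would catalog the forbidden configurations'' and ``design local discharging rules,'' yet you never exhibit a single configuration, a single rule, or a single completed extension argument. For these statements the entire difficulty is in that catalog: for item~\ref{T1}, for instance, the paper must show $H$ contains no $((\Delta-2)^-,(\Delta-1)^-,\Delta^-)$-vertex, no $(3,3,(\Delta-1)^-,\Delta^-)$-vertex and no $(3,4^-,4^-,4^-)$-vertex, and each later item needs a different, carefully tuned list (e.g.\ no $(8^-,8^-,8^-,\Delta^-,\Delta^-)$-vertex for item~\ref{T5}, no $(5,5,5,5,5,5,\Delta^-)$-vertex for item~\ref{T6}); these lists must be exactly strong enough for the discharging to close, and finding them is not routine.

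Moreover, the one obstacle you flag as delicate --- keeping the $\ell$-bound on weak incidences while re-extending --- is precisely where a ``greedy counting'' argument fails and where the paper needs genuine work: in the reducibility of the $((\Delta-2)^-,(\Delta-1)^-,\Delta^-)$-vertex, the bad case $\phi'(A_{u_1})\subseteq\phi'(I_v)\cup\phi'(A_v)$ cannot be resolved by counting at all; the paper must recolor incidences at \emph{other} neighbors of $v$ (a case analysis over how $\phi'(A_{u_2})$ interacts with the colors already placed) before the extension goes through. Your proposal acknowledges that such swaps ``usually'' are required but gives no mechanism for them. There is also a smaller unaddressed technical point: after deleting a vertex, the maximum degree may drop, so minimality gives a coloring with respect to $\Delta(H')$, not $\Delta(H)$; the paper handles this by fixing $k=\max\{\Delta(H),c\}$ and invoking the monotonicity of Remark~\ref{monotone}, and without some such device your induction is not even well-posed. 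In short, the approach is the right one, but the proposal is a description of the method rather than a proof of the theorem.
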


\medskip

For planar graphs, Hosseini Dolama \emph{et al.}~\citep{hdsz04} proved in 2004 the following result: 
\begin{theorem}[\label{DolEr}Hosseini Dolama and Sopena~\cite{hdsz04}]
Every  planar graph $G$ admits an incidence $(\Delta+7)$-coloring. 
\end{theorem}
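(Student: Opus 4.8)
The plan is to derive the statement directly from the sparseness results proved earlier in the paper, using the single structural fact that planar graphs are sparse. First I would record that every simple planar graph $G$ satisfies $\mathrm{mad}(G)<6$: for any subgraph $H$ with $|V(H)|\geq 3$, Euler's formula gives $|E(H)|\leq 3|V(H)|-6$, so $\frac{2|E(H)|}{|V(H)|}\leq 6-\frac{12}{|V(H)|}<6$, while for $|V(H)|\leq 2$ the ratio $\frac{2|E(H)|}{|V(H)|}$ is at most $1<6$. Hence the maximum average degree of any planar graph is strictly below $6$, which is exactly the regime covered by Theorem~\ref{SHAmad}, items~(\ref{T5}) and~(\ref{T6}).

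The remaining work is a case analysis on $\Delta=\Delta(G)$, arranged so that the two relevant items of Theorem~\ref{SHAmad} together cover every possible value of $\Delta$. If $\Delta\leq 6$ or $\Delta\geq 12$, then item~(\ref{T5}) yields $\chi_i(G)\leq \Delta+6\leq \Delta+7$. If $7\leq \Delta\leq 11$, then item~(\ref{T6}) yields $\chi_i(G)\leq \Delta+7$. Since every integer $\Delta\geq 0$ lies in one of the ranges $\Delta\leq 6$, $7\leq\Delta\leq 11$, or $\Delta\geq 12$, the bound $\chi_i(G)\leq \Delta+7$ holds in all cases, which is precisely the assertion that $G$ is incidence $(\Delta+7)$-colorable.

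In this route there is essentially no obstacle: the only things to verify are the elementary Euler bound and that the $\Delta$-ranges of the two items partition all of $\mathbb{N}$, so the proof reduces to a short corollary. A genuinely self-contained argument---one not appealing to Theorem~\ref{SHAmad}, which is the harder and chronologically later result---would instead start from the fact that planar graphs are $5$-degenerate and invoke Theorem~\ref{th:degen} with $k=5$ to obtain an incidence $(\Delta+9,5)$-coloring; the task would then be to save two colors by exploiting planarity rather than mere degeneracy. Here the main difficulty lies in a discharging argument: one fixes a target $(\Delta+7,\ell)$-coloring for a suitable $\ell$, takes a minimal counterexample, establishes a list of reducible configurations (beginning with the absence of vertices of degree $1$, and continuing with low-degree vertices whose neighbourhoods are constrained), and then uses the charges $d(v)-6$ on vertices and $2d(f)-6$ on faces---whose total is $-12$ by Euler's formula---to show that such a counterexample cannot exist. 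The delicate point in that argument is the colour-counting for the higher-degree reducible configurations, where the bound $\ell$ on the number of colours appearing on weak incidences must be preserved when the coloring is extended to the deleted vertex.
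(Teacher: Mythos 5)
Your proof is correct and is essentially the paper's own argument: the paper likewise observes that planarity forces $\mathrm{mad}(G)<6$ and then deduces the bound from Theorem~\ref{SHAmad}, items~\ref{T5} and~\ref{T6}, whose ranges of $\Delta$ together cover all cases (giving $\Delta+6\leq\Delta+7$ when $\Delta\leq 6$ or $\Delta\geq 12$, and $\Delta+7$ when $7\leq\Delta\leq 11$). Your closing remarks about a self-contained degeneracy or discharging argument are reasonable context but are not needed; the corollary-style derivation is exactly what the paper does.
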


By using the link between the incidence chromatic number, the star arboricity and the chromatic index of a graph, Yang proved the following theorem:

\begin{theorem}[\label{Yang}Yang~\citep{Ya12}]
For every planar graph $G$,  $\chi_i(G) \leq\Delta(G)+5$, if $\Delta(G)\neq 6$ and  $\chi_i(G)\leq 12$, if $\Delta(G)= 6$.
\end{theorem}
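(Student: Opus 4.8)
The plan is to reduce the statement to two classical invariants — the chromatic index $\chi'(G)$ and the \emph{star arboricity} $\mathrm{sa}(G)$, the least number of star forests needed to partition $E(G)$ — and then to feed in sharp planar bounds for each. The cornerstone I would establish is the inequality
$$\chi_i(G)\ \le\ \chi'(G)+\mathrm{sa}(G).$$
To prove it I would use the correspondence, going back to Guiduli~\cite{gui} and the directed star arboricity of Algor and Alon~\cite{alg}, between incidence colorings of $G$ and partitions of the symmetric digraph $D$ (obtained from $G$ by replacing each edge $uv$ by its two opposite arcs $u\to v$ and $v\to u$) into \emph{galaxies}, i.e.\ arc-disjoint unions of vertex-disjoint out-stars. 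Concretely, identifying the incidence $(v,uv)$ with the arc $u\to v$, one checks that the three adjacency rules translate exactly into the galaxy constraints: rule~(i) forbids two arcs with the same head (in-degree at most $1$), while rules~(ii) and~(iii) forbid a vertex from being simultaneously the head and the tail of two chosen arcs. Thus a color class of an incidence coloring is precisely a galaxy, so $\chi_i(G)$ equals the directed star arboricity of $D$.

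To exhibit a decomposition of the arcs of $D$ into $\chi'(G)+\mathrm{sa}(G)$ galaxies, I would first fix a partition of $E(G)$ into star forests $S_1,\dots,S_{\mathrm{sa}(G)}$, rooting each star at its center. Orienting every edge of $S_i$ from its center towards its leaf yields a single galaxy for each $i$; these $\mathrm{sa}(G)$ galaxies cover exactly the center-to-leaf arc of every edge. For the reverse (leaf-to-center) arcs I would fix a proper edge coloring $\phi$ of $G$ with $\chi'(G)$ colors and assign to the leaf-to-center arc of $uv$ the color $\phi(uv)$. Since each color class of $\phi$ is a matching, the arcs receiving a fixed such color lie on pairwise disjoint edges and hence trivially form a galaxy. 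The two families use disjoint palettes and together cover all $2|E(G)|$ arcs, which gives the inequality.

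It then remains to insert the planar estimates. For the star arboricity I would invoke the sharp fact that every planar graph satisfies $\mathrm{sa}(G)\le 5$, and for the chromatic index Vizing's planar theorem, namely $\chi'(G)=\Delta(G)$ whenever $\Delta(G)\ge 7$, together with the universal bound $\chi'(G)\le\Delta(G)+1$. Combining these with the inequality above gives $\chi_i(G)\le\Delta(G)+5$ when $\Delta(G)\ge 7$, and $\chi_i(G)\le(\Delta(G)+1)+5=12$ when $\Delta(G)=6$. The small cases $\Delta(G)\le 5$ are not covered by this argument, since there $\chi'(G)+\mathrm{sa}(G)$ may be as large as $\Delta(G)+6$; for them I would instead invoke the universal bound $\chi_i(G)\le 2\Delta(G)$ of Theorem~\ref{borne}, which already yields $\chi_i(G)\le 2\Delta(G)\le\Delta(G)+5$.

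I expect the main obstacle to be precisely the value $\Delta(G)=6$: this is exactly the regime where Vizing's planar theorem does not force $G$ to be class~$1$, so no improvement over $\chi'(G)\le\Delta(G)+1$ is available and the bound unavoidably degrades to $12$ — which is why the statement isolates this case. A secondary point needing care is the faithful verification that the arc identification turns each of the three incidence-adjacency rules into the out-star/galaxy condition, with no constraint lost or spuriously added, so that the counting in the construction is exact.
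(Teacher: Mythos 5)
The paper itself does not prove this statement — it is quoted from Yang's paper — and your argument is exactly the route the paper attributes to Yang: the identity $\chi_i(G)=$ directed star arboricity of the symmetric digraph (Guiduli), the resulting inequality $\chi_i(G)\le\chi'(G)+\mathrm{sa}(G)$ obtained by orienting each star forest outward and covering the reverse arcs by matchings, and then the planar facts $\mathrm{sa}(G)\le 5$, $\chi'(G)=\Delta(G)$ for planar graphs with $\Delta(G)\ge 7$, $\chi'(G)\le\Delta(G)+1$ for $\Delta(G)=6$, and $\chi_i(G)\le 2\Delta(G)$ for $\Delta(G)\le 5$. Your proof is correct and essentially the same as the cited one; the only cosmetic point is that class-1-ness of planar graphs with $\Delta(G)=7$ is due to Sanders and Zhao rather than to Vizing, whose theorem covers $\Delta(G)\ge 8$.
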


As every planar graph with girth $g$ satisfies $\rm{mad}(G)< \frac{2g}{g-2}$, the following two corollaries can be derived from Theorem~\ref{SHAmad}.\ref{T1} and Theorem~\ref{SHAmad}.\ref{T2}. Along the way, we improve the bound given in Theorem \ref{Yang} for triangle-free planar graphs:

\begin{corollary}\label{corolSHAmad}
Let $G$ be a triangle-free planar graph with $\Delta(G)\geq 7$. Then, $\chi_i(G) \le \Delta(G)+3$. 
\end{corollary}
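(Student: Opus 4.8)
The plan is to derive this directly from Theorem~\ref{SHAmad}.\ref{T1} by controlling the maximum average degree of a triangle-free planar graph. The only external input needed is the classical sparsity estimate, stated just before the corollary, that every planar graph of girth $g$ satisfies ${\rm mad}(G)<\frac{2g}{g-2}$.

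First I would observe that a triangle-free graph contains no cycle of length $3$, so its girth is at least $4$. Since the function $g\mapsto \frac{2g}{g-2}$ is strictly decreasing for $g>2$, a girth of at least $4$ yields ${\rm mad}(G)<\frac{2\cdot 4}{4-2}=4$. Hence every triangle-free planar graph satisfies ${\rm mad}(G)<4$.

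With this bound in hand, and using the hypothesis $\Delta(G)\geq 7$, I would simply invoke Theorem~\ref{SHAmad}.\ref{T1}, which asserts that any graph with ${\rm mad}(G)<4$ and $\Delta(G)\geq 7$ is incidence $(\Delta(G)+3)$-colorable. This immediately gives $\chi_i(G)\leq\Delta(G)+3$, as desired.

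There is essentially no obstacle here: all the difficulty is concentrated in the proof of Theorem~\ref{SHAmad}.\ref{T1}, which treats the general sparse (not necessarily planar) setting, presumably via a discharging and reducibility argument. The corollary is a one-line specialization once the girth-to-${\rm mad}$ translation is made, and planarity is used only to obtain the maximum average degree bound.
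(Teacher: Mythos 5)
Your proof is correct and is exactly the paper's argument: the paper derives this corollary by noting that a triangle-free planar graph has girth at least $4$, hence ${\rm mad}(G)<\frac{2\cdot 4}{4-2}=4$, and then applying Theorem~\ref{SHAmad}.\ref{T1} with $\Delta(G)\geq 7$. There is no difference in approach.
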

\begin{corollary}\label{corolSHAmadbis}
Let $G$ be a  planar graph with  with girth $g \geq5$ and $\Delta(G)\geq 9$. Then, $\chi_i(G) \le \Delta(G)+4$. 
\end{corollary}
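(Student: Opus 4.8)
The plan is to reduce Corollary~\ref{corolSHAmadbis} directly to Theorem~\ref{SHAmad}.\ref{T2} by controlling the maximum average degree through the girth hypothesis. The only work is to verify that the sparsity assumption $\rm{mad}(G)<\frac{9}{2}$ is automatically satisfied whenever $G$ is planar with girth $g\geq 5$, after which the conclusion follows verbatim from the theorem under the assumption $\Delta(G)\geq 9$.

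First I would establish the standard bound that every planar graph $G$ of girth at least $g$ satisfies $\rm{mad}(G)<\frac{2g}{g-2}$, which is exactly the fact invoked in the paragraph preceding the corollaries. This follows from Euler's formula applied to an arbitrary subgraph $H\subseteq G$: such an $H$ is again planar with girth at least $g$, so writing $|V(H)|-|E(H)|+|F(H)|=2$ and combining it with the face-counting inequality $2|E(H)|\geq g\,|F(H)|$ (each face is bounded by at least $g$ edges, each edge lies on at most two faces) yields $|E(H)|<\frac{g}{g-2}|V(H)|$, hence $\frac{2|E(H)|}{|V(H)|}<\frac{2g}{g-2}$. Taking the maximum over all subgraphs gives $\rm{mad}(G)<\frac{2g}{g-2}$.

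Next I would note that the function $g\mapsto\frac{2g}{g-2}$ is decreasing for $g>2$, so for every $g\geq 5$ we have $\frac{2g}{g-2}\leq\frac{10}{3}<\frac{9}{2}$, whence $\rm{mad}(G)<\frac{9}{2}$. Since we also assume $\Delta(G)\geq 9$, both hypotheses of Theorem~\ref{SHAmad}.\ref{T2} are met and we conclude at once that $\chi_i(G)\leq\Delta(G)+4$. There is no genuine obstacle in this argument: the entire difficulty is already contained in Theorem~\ref{SHAmad}.\ref{T2}, and the corollary is simply its specialization to planar graphs of large girth. The only point requiring any care is the numerical comparison $\frac{2g}{g-2}<\frac{9}{2}$ for $g\geq 5$, which is immediate from the monotonicity just observed.
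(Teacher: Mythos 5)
Your proof is correct and follows exactly the paper's intended route: the paper derives this corollary by combining the bound $\mathrm{mad}(G)<\frac{2g}{g-2}$ for planar graphs of girth $g$ (so $\mathrm{mad}(G)<\frac{10}{3}<\frac{9}{2}$ when $g\geq 5$) with Theorem~\ref{SHAmad}.\ref{T2}. Your write-up merely fills in the standard Euler-formula verification of that sparsity bound, which the paper takes for granted.
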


Moreover if $G$ is a planar graph, then $\rm{mad}(G)<6$. We deduce another proof of Theorem~\ref{DolEr}. 
\begin{corollary}\label{corolSHAmadplanaire}
Let $G$ be a planar graph. Then, $\chi_i(G) \le \Delta(G)+7$. 
\end{corollary}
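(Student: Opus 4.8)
The plan is to reduce the statement to Theorem~\ref{SHAmad}, parts~\ref{T5} and~\ref{T6}, both of which apply to graphs with ${\rm mad}(G)<6$. The first step is to record the sparseness condition promised in the sentence preceding the corollary: every planar graph $G$ on at least three vertices satisfies $|E(G)| \le 3|V(G)|-6$ by Euler's formula, so its average degree is $\frac{2|E(G)|}{|V(G)|} \le 6 - \frac{12}{|V(G)|} < 6$. Since every subgraph of a planar graph is again planar, taking the maximum of this quantity over all $H \subseteq G$ gives ${\rm mad}(G) < 6$; the degenerate subgraphs on fewer than three vertices only have smaller average degree and do not affect the supremum.

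With ${\rm mad}(G)<6$ established, I would split on the value of $\Delta(G)$. If $\Delta(G) \le 6$ or $\Delta(G) \ge 12$, then Theorem~\ref{SHAmad}.\ref{T5} directly yields $\chi_i(G) \le \Delta(G)+6 \le \Delta(G)+7$. The only remaining range is $7 \le \Delta(G) \le 11$, which is precisely the hypothesis of Theorem~\ref{SHAmad}.\ref{T6}, giving $\chi_i(G) \le \Delta(G)+7$. These two cases are exhaustive over all possible values of $\Delta(G)$, so in every case $\chi_i(G) \le \Delta(G)+7$, which is the desired conclusion.

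Since all of the combinatorial work is already absorbed into Theorem~\ref{SHAmad}, there is no genuine obstacle in this corollary. The only points requiring any care are bookkeeping ones: verifying that the case split on $\Delta(G)$ covers the integers with no gap (the union of $\{\Delta \le 6\}$, $\{7 \le \Delta \le 11\}$, and $\{\Delta \ge 12\}$ is indeed all of $\mathbb{N}$), and confirming that the ${\rm mad}$ bound for planar graphs holds uniformly over all subgraphs rather than merely for $G$ itself, which is why the hereditary closure of planarity under taking subgraphs is invoked. This also explains why one recovers Theorem~\ref{DolEr} as an alternative proof.
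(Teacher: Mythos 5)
Your proof is correct and follows the same route the paper takes: the paper's (implicit) argument is precisely that planarity gives ${\rm mad}(G)<6$ and then Theorem~\ref{SHAmad}.\ref{T5} and Theorem~\ref{SHAmad}.\ref{T6} together cover all values of $\Delta(G)$. Your write-up merely makes explicit the Euler-formula derivation of the ${\rm mad}$ bound and the exhaustiveness of the case split, both of which the paper leaves to the reader.
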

We have also a better bound for large maximum degree.
\begin{corollary}\label{corolSHAmadpla}
Let $G$ be a planar graph with $\Delta(G)\geq 12$. Then, $\chi_i(G) \le \Delta(G)+6$. 
\end{corollary}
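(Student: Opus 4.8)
The plan is to derive this directly from Theorem~\ref{SHAmad}.\ref{T5}, which already guarantees an incidence $(\Delta+6)$-coloring for every graph satisfying ${\rm mad}(G)<6$ together with $\Delta(G)\geq 12$. The only additional ingredient needed is the standard relationship between planarity and the maximum average degree, which has in fact already been recorded in the text immediately preceding Corollary~\ref{corolSHAmadplanaire}.

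First I would observe that every subgraph $H$ of a planar graph $G$ is itself planar, so Euler's formula applies to each such $H$. For $|V(H)|\geq 3$ this gives $|E(H)|\leq 3|V(H)|-6$, hence
\[
\frac{2|E(H)|}{|V(H)|}\leq 6-\frac{12}{|V(H)|}<6,
\]
while the cases $|V(H)|\leq 2$ are trivial. Taking the maximum over all subgraphs $H\subseteq G$ therefore yields ${\rm mad}(G)<6$. With this bound in hand, the hypothesis $\Delta(G)\geq 12$ places $G$ exactly in the regime $\Delta(G)\geq 12$ of Theorem~\ref{SHAmad}.\ref{T5}, and applying that theorem immediately produces an incidence coloring with $\Delta(G)+6$ colors, so $\chi_i(G)\leq \Delta(G)+6$.

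There is essentially no obstacle at the level of the corollary itself: all of the substantive work has been absorbed into the proof of Theorem~\ref{SHAmad}.\ref{T5}, which I am assuming. The corollary merely specializes that theorem to planar graphs, with the degree threshold $\Delta(G)\geq 12$ inherited verbatim. The one point worth a sentence of care is simply to confirm that the inequality ${\rm mad}(G)<6$ is \emph{strict} for planar graphs (it is, as the computation above shows), so that the open condition ${\rm mad}(G)<6$ in Theorem~\ref{SHAmad}.\ref{T5} is genuinely met.
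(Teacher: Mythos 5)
Your proof is correct and follows exactly the route the paper intends: planarity forces ${\rm mad}(G)<6$ (the paper simply asserts this, where you supply the Euler-formula computation), and then Theorem~\ref{SHAmad}.\ref{T5} with $\Delta(G)\geq 12$ gives the bound $\chi_i(G)\leq\Delta(G)+6$ immediately. No gaps; the only difference is that you spell out the ${\rm mad}(G)<6$ step that the paper leaves implicit.
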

\medskip

However, Theorem~\ref{th:BH1} can be proved with a relatively simple discharging argument, and gives hope that further development of more complicated reducible configurations and discharging rules might unlock generic results with consequences beyond purely extremal.

\medskip

Before proving our results we introduce some notation.

\paragraph{Notation}
Let $G$ be a graph. Let $d(v)$ denote the degree of a vertex $v$ in $G$. A vertex of degree $k$ is called a $k$-vertex. A $k^+$-vertex (respectively, $k^-$-vertex) is a vertex of degree at least $k$ (respectively, at most $k$). A $(l_1,\cdots,l_k)$-vertex is a $k$-vertex having $k$-neighbors $x_1,\cdots,x_k$ such that $d(x_i)=l_i$ for $i\in \{1,\cdots,k\}$.
We will also use for $l_i$ the notation $l_i^+$ (respectively
$l_i^-$), if $x_i$ is  a vertex of degree at least $l_i$ (respectively at most $l_i$).

\section{Proof of Theorem~\ref{th:BH1}}
	
\subsection{Structural property}

Let $d \in \mathbb{N}$, $k \in \mathbb{N}$ and $\alpha \in \mathbb{R}^+$. We consider $\mathcal{G}_d$ the class of graphs with maximum degree at most $d$, and $\mathcal{H}_{d,\alpha} \subseteq \mathcal{G}_d$ the subset of graphs which are not incidence $(d+(1+\alpha) k-1,(1+\alpha) k-1)$-colorable. 

\begin{lemma}\label{lem:BH1}
If $G$ is a minimal graph in $\mathcal{H}_{d,\alpha}$, then every $u \in V(G)$ with $d(u) \le k-1$ has more than $(1+\alpha) k-d(u)$ neighbors of degree at least $d-k+2$.
\end{lemma}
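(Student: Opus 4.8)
The plan is to argue by contradiction with the usual "minimal counterexample $+$ extension" scheme. Let $G$ be a minimal graph in $\mathcal{H}_{d,\alpha}$, set $t := d(u) \le k-1$, and suppose toward a contradiction that $u$ has at most $(1+\alpha)k-t$ neighbours of degree at least $d-k+2$; call these neighbours \emph{heavy} and the others \emph{light}. Write $K := (1+\alpha)k$, so we work with $C := d+K-1$ colours and a weak-degree bound $K-1$. By minimality, $G-u$ admits an incidence $(C,K-1)$-colouring $\phi$, and it suffices to extend $\phi$ to the $2t$ incidences at $u$: the strong incidences $a_v := (u,uv)$ and the weak incidences $b_v := (v,uv)$ over the neighbours $v$. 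First I would record the constraints: the $a_v$ must be pairwise distinct and distinct from every $b_w$ (the strong incidences of $u$ are adjacent to all incidences of $u$); each $a_v$ is a weak incidence of $v$, so it must avoid the strong incidences of $v$ and respect $v$'s weak-degree bound; each $b_v$ is a strong incidence of $v$, so it must avoid every incidence of $v$; the weak-degree bound at $u$ itself is free, since $u$ has only $t \le K-1$ weak incidences.

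The structural reason the threshold $d-k+2$ is the right one is the following. Fix the colours $A := \{a_v\}$ first; then for a neighbour $v$ the set $\bar F_v$ of colours avoiding all incidences of $v$ in $G-u$ has size at least $C-(d(v)-1)-(K-1) = d-d(v)+1$. When $v$ is light, i.e. $d(v)\le d-k+1\le d-t$, this is at least $t+1 > |A|$, so some colour of $\bar F_v$ escapes $A$ and can be used for $b_v$: \emph{light neighbours never obstruct the extension}, whatever $A$ is. Only heavy neighbours can be problematic, and there are at most $K-t$ of them. For each heavy $v$ I would reserve one colour $r_v \in \bar F_v$; writing $R$ for the reserved set we have $|R|\le K-t$, and once $A$ avoids $R$ we may set $b_v := r_v$ on the heavy neighbours and pick a surviving colour of $\bar F_v$ on the light ones.

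It then remains to choose the strong colours, that is, to find a system of distinct representatives: pairwise distinct $a_v$ with $a_v$ in its list $\Lambda_v$ and $a_v\notin R$. I would read off the lists from the weak-degree bound at $v$: if $v$ does not yet see $K-1$ colours on its weak incidences then $\Lambda_v$ is the whole palette minus the strong incidences of $v$, of size at least $K$; if $v$ already sees $K-1$ colours then $a_v$ is forced into that palette $W_v$, of size exactly $K-1$. Deleting $R$ leaves every list of size at least $(K-1)-(K-t)=t-1$, so Hall's condition is immediate for every subfamily of fewer than $t$ neighbours. The full family is the only case to verify: if some neighbour has a non-full weak palette its list already has at least $K$ colours and the union minus $R$ has size at least $K-(K-t)=t$; if \emph{every} neighbour has a full weak palette, then each reserved $r_v\in\bar F_v$ lies outside that neighbour's own palette $W_v$ (strong and weak colours are disjoint at $v$), and I would use this, together with $t\le k-1$, to show the union of the $W_v$ minus $R$ still has at least $t$ colours. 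A valid SDR yields the $a_v$; the reserved colours give the heavy $b_v$ and free colours give the light $b_v$; this extends $\phi$ to $G$, contradicting $G\in\mathcal{H}_{d,\alpha}$.

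I expect the main obstacle to be exactly this last Hall step: forcing the strong incidences of $u$ to be pairwise distinct while simultaneously avoiding the colours reserved to rescue the weak incidences of the heavy neighbours. The degenerate configuration in which every neighbour already sees $K-1$ colours on its weak incidences is where an apparent off-by-one surfaces, and clearing it cleanly — by choosing the reserved colours outside the weak palettes and exploiting the precise cap $K-t$ on heavy neighbours — is the delicate point on which the whole argument turns.
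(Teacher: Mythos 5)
Your proposal is correct, and it settles the crucial step by a genuinely different device than the paper. The shared skeleton is identical: remove $u$, invoke minimality, split the neighbours into heavy ones (degree at least $d-k+2$, at most $(1+\alpha)k-d(u)$ of them under the contradiction hypothesis) and light ones, secure the heavy weak incidences \emph{before} the strong incidences of $u$ are fixed, and postpone the light weak incidences to the very end since each always keeps a spare colour. Where you diverge is in making the strong incidences pairwise distinct: the paper uses no Hall-type argument at all. It sorts the neighbours $v_1,\dots,v_p$ by decreasing degree, takes as list for each $(u,uv_i)$ a palette $W_i$ of exactly $(1+\alpha)k-1$ colours disjoint from the strong colours at $v_i$ and containing its weak colours, colours each $(v_i,uv_i)$ with $i\le(1+\alpha)k-p$ first with a colour outside its own $W_i$, and then colours the $(u,uv_i)$ greedily by decreasing index: at most $(1+\alpha)k-p$ forbidden colours come from those weak incidences and $p-i$ from already-coloured strong ones, so at least $i-1$ colours of $W_i$ survive (one more, hence at least $i$, for indices $i\le(1+\alpha)k-p$), and this is always positive, so the greedy never stalls and no case analysis is needed. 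Your SDR formulation buys order-independence at the price of verifying Hall's condition, and the only non-trivial subfamily is exactly the one you flag; it does close, with precisely the two ingredients you name. Set $U=\bigcup_v W_v$ in the all-saturated case. Either $|U|\ge K$, and then $|U\setminus R|\ge K-|R|\ge K-(K-t)=t$; or $|U|=K-1$, which forces $W_v=U$ for every neighbour, and since each reserved colour $r_v\in\bar F_v$ avoids its own palette $W_v=U$ we get $R\cap U=\emptyset$, whence $|U\setminus R|=K-1\ge t$ because $t\le k-1\le K-1$. With this two-line dichotomy inserted your argument is complete; the paper's ordering trick is what lets it sidestep the dichotomy entirely, which is why its write-up needs only counting.
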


\begin{proof}
Let $G$ be a minimal counter-example to Theorem~\ref{th:BH1}, and $u \in V(G)$ a witness of it. Let $p=d(u)$, and $v_1,\ldots,v_p$ be the $p$ neighbors of $u$, sorted by decreasing degree. By assumption, for every $p \geq i > (1+\alpha) k-p$, we have $d(v_i)\leq  d-k+1$.
By minimality of the counter-example, the graph $G'= G \setminus \{u\}$ admits an incidence $(d+(1+\alpha) k-1,(1+\alpha) k-1)$-coloring. Let us prove that we can extend the coloring to $G$.

For every $1 \leq i \leq p$, we set $e_i=(u,uv_i)$. For every $ 1 \leq i \le (1+\alpha) k-p$, we set $f_i=(v_i,uv_i)$. For every $ p \geq i > (1+\alpha) k-p$, we set $g_i=(v_i,uv_i)$. When referring to \emph{the $f_i$'s}, we actually refer to the set $\{f_i | 1 \leq i \leq (1+\alpha) k-p\}$. Note that $(1+\alpha) k-p \geq 1$. We can similarly define \emph{the $e_i$'s} and \emph{the $g_i$'s}.

Let us now evaluate how many free colors are available, in the worst case, to color each incidence of $u$. Let $1 \leq i \leq p$, and consider $e_i$. We have to ensure that at most $(1+\alpha) k-1$ different colors appear on weak incidences of $v_i$. Let $S_i$ be the set of colors appearing on strong incidences of $v_i$. We have $|S_i|=d(v_i)-1 \leq d-1$. Since there are $d+(1+\alpha) k-1$ different colors, and by assumption, there is a set $W_i$ of $(1+\alpha) k-1$ different colors, with $W_i \cap S_i= \emptyset$, from which the weak incidences of $v_i$ are colored. Then, $e_i$ might be colored with any element of $W_i$, which makes $(1+\alpha) k-1$ colors available for $e_i$. If $i \le (1+\alpha) k-p$, we consider $f_i$. From the initial $d+(1+\alpha)k-1$ colors, we remove $W_i$ ($(1+\alpha)k-1$ colors) for the weak incidences of $v_i$, and at most $d(v_i)-1 \leq d-1$ for the strong incidences of $v_i$. Therefore, $f_i$ has at least one available color, which by construction does not belong to $W_i$. If $i > (1+\alpha) k-p$, we consider $g_i$. From the initial $d+(1+\alpha)k-1$ colors, we remove $W_i$ ($(1+\alpha)k-1$ colors) for the weak incidences of $v_i$, and $d(v_i)-1 \leq d-k$ for the strong incidences of $v_i$. Therefore, $g_i$ has at least $k$ available colors.

We extend the coloring to $G$ by coloring the $f_i$'s, the $e_i$'s and the $g_i$'s, in that order. Let us now argue why this is possible. Since all $v_i$'s are distinct vertices, no two $(v_i,uv_i)$'s can be adjacent. 

Therefore, we first color independently each $f_i$ with a color that does not belong to $W_i$.

Now, for every $i > (1+\alpha) k-p$, $e_i$ has at least $(1+\alpha)k-1 - ((1+\alpha) k-p)=p-1$ colors available. For every $i \leq (1+\alpha) k-p$, the same analysis goes except that $f_i$ was colored with a color that was not available for $e_i$, which allows for one more color: $e_i$ has at least $(1+\alpha)k-1 - ((1+\alpha) k-p-1)=p$ colors available. Consequently, we can color the $e_i$'s by decreasing index.

Now, each $g_i$ has at least $k -p \geq 1$ colors available, and we color the $g_i$'s independently. Since $p=d(u) \le k-1 \le (1+\alpha)k -1$, the coloring obtained for $G$ is an incidence $(d+(1+\alpha) k,(1+\alpha) k)$-coloring.
\end{proof}

Note that Lemma~\ref{lem:BH1} implies that $G$ contains no vertex with degree at most $\frac{(1+\alpha)k}{2}$.

\subsection{Discharging procedure}


Let us try to find sufficient conditions on $d$ that ensure all minimal graphs in $\mathcal{H}_{d,\alpha}$ have average degree at least $k$. Let $G$ be a minimal graph in $\mathcal{H}_{d,\alpha}$.
\medskip

For this purpose, we use a discharging procedure. We assign a weight $\omega(u)=d(u)-k$ to every vertex $u$ of $G$. If we can redistribute the weight in such a way that every vertex has a non-negative weight, then $ad(G) \geq k$. Note that at the beginning, only vertices of degree $k-1$ or less have negative weight. A naive discharging rule to correct this is as follows:

Let $M, c \in \mathbb{R}$.
\begin{itemize}
\item \textbf{R}: Every vertex of degree at least $M$ gives a weight of $c$ to every incident vertex of degree at most $(k-1)$.
\end{itemize}

We try to find good values of $M$ and $c$ so that no vertex has a negative weight after the discharging procedure. For $R$ to be well-defined and for us to be able to apply Lemma~\ref{lem:BH1} in order to ensure that every vertex of degree at most $k-1$ has enough neighbors of degree at least $M$, we need (\ref{Mdk}).

\begin{equation}
k \leq M \leq d-k+2
\label{Mdk}
\end{equation}

For every vertex of degree at least $M$ to have a non-negative weight after application of $R$, it is sufficient to satisfy $M-k - c \times M \geq 0$, thus (\ref{M}), in case all neighbors are of degree less than $k$ and need to receive $c$.

\begin{equation}
M \geq \frac{k}{1-c}
\label{M}
\end{equation}

We assume from now on that (\ref{Mdk}) and (\ref{M}) are satisfied. Every vertex of degree less than $M$ and at least $k$ is not affected by $R$ and has a constant non-negative weight. Therefore, we only need to look at sufficient conditions for vertices of degree at most $k-1$ to have a non-negative weight. Note that Lemma~\ref{lem:BH1} implies that $G$ contains no vertex with degree $\frac{(1+\alpha)k}{2}$ or less. Because $k$ is always an integer while $(1+\alpha)k$ may not always be, we distinguish the case $\alpha=0$ from the rest. Let $u$ be a vertex of degree $p$ at most $k-1$.

\begin{enumerate}
\item Assume $\alpha=0$. By Lemma~\ref{lem:BH1}, $u$ has more than $k-p$ (thus at least $k-p+1$) neighbors of degree at least $d-k+2 \ge M$. Hence $u$ receives at least $(k-p+1)\times c$. For $u$ to have a non-negative final weight, it suffices to have $p-k+c\times(k-p+1) \geq 0$, thus (\ref{u0}).

\begin{equation}
c \geq 1-\frac{1}{k+1-p}
\label{u0}
\end{equation}

The strongest constraint comes from smallest possible $p$, \emph{i.e.} $\frac{k+1}{2}$ (if $k$ is odd, otherwise $p$ has to be at least $\frac{k}{2}+1>\frac{k+1}{2} $). Therefore, we can replace (\ref{u0}) with (\ref{u0b}).

\begin{equation}
c \geq 1-\frac{2}{k+1}
\label{u0b}
\end{equation}

We know that if $d$, $M$ and $c$ satisfy (\ref{Mdk}), (\ref{M}) and (\ref{u0b}), then $\rm{ad}(G) \geq k$ (where $\rm{ad}(G)$ denotes the average degree of $G$). We can take $c= 1-\frac{2}{k+1}$, $M=\frac{k(k+1)}{2}$. It follows that any $d \geq \frac{k(k+3)}{2}-2$ guarantees the conclusion.

\item Case $\alpha > 0$. By Lemma~\ref{lem:BH1}, $u$ has more than $(1+\alpha)k-p$ neighbors of degree at least $d-k+2 \ge M$. Hence $u$ receives at least $((1+\alpha)k-p)\times c$. For $u$ to have a non-negative final weight, it suffices to have $p-k+c\times((1+\alpha)k-p) \geq 0$, or equivalently $c \geq 1-\frac{\alpha k}{(1+\alpha)k-p}$. The constraint is strongest when $p$ is smallest possible, \emph{i.e.} $\frac{(1+\alpha)k}{2}$, hence (\ref{u1}).

\begin{equation}
c \geq 1-\frac{2 \alpha k}{(1+\alpha)k}= 1 - \frac{2 \alpha}{1+\alpha}
\label{u1}
\end{equation}

We know that if $d$, $M$ and $c$ satisfy (\ref{Mdk}), (\ref{M}) and (\ref{u1}), then $\rm{ad}(G) \geq k$. We can take $c= 1-\frac{2 \alpha}{1+\alpha}$, $M=\frac{(\alpha+1)k}{2 \alpha}$. It follows that any $d \geq \frac{k(3\alpha+1)}{2 \alpha}-2$ guarantees the conclusion.
\end{enumerate}
\hfill $\square$

\bigskip

Before proving Theorem~\ref{SHAmad}, we have to introduce some notations used by Hosseini Dolama \emph{et al.}~\cite{hdsz04}.

\medskip

\begin{definition}[Hosseini Dolama \emph{et al.}~\cite{hdsz04}]\label{def}
Let $G$ be a graph.
A partial incidence coloring $\phi'$ of $G$, is an incidence coloring only defined on some subset $I$ of $I(G)$. For every uncolored incidence $(u,uv)\in I(G)\setminus I$,  $F_G^{\phi'}(u,uv)$ is defined by the set of forbidden colors of $(u,uv)$, that is: \begin{center}$F_G^{\phi'}(u,uv)= \phi'(A_u)\cup\phi'(I_u) \cup\phi'(I_v)$,\end{center} where
$I_u$ is the set of incidences of the form $(u,uv)$ and $A_u$ is the set of incidences of the form $(v,vu)$.
\end{definition}

Note that the $(k,\ell)$-incidence coloring can also be seen in this setting as a coloring $\phi$ of $G$ such that for every vertex $v\in V(G)$, $\mid \phi(A_v)\mid\leq \ell$.

\begin{remark}\label{monotone}
{\rm Every $(k,\ell)$-incidence coloring  of a graph is a $(k',\ell)$-incidence coloring for any $k'>k$.}
\end{remark}

\section{Proof of Theorem~\ref{SHAmad}.\ref{T1}}


\subsection{Structural properties}
We proceed by contradiction. Let $H$ be a counterexample to Theorem~\ref{SHAmad}.\ref{T1} that minimizes $|E(H)|+|V(H)|$. By hypothesis there exists $k\geq \max\{\Delta(G),7\}$ such that $H$ does not admit an incidence $(k+3,3)$-coloring. Let $k\geq \max\{\Delta(G),7\}$ be the smallest integer such that $H$ does not admit an incidence $(k+3,3)$-coloring. By using Remark \ref{monotone} we must have $k= \max\{\Delta(G),7\}$. Moreover by minimality it is easy to see that $H$ is connected.\\
$H$ satisfies the following properties:

\begin{lemma}\label{lemma}
$H$ does not contain:
\begin{enumerate}
\item \label{1v} a 1-vertex,
\item \label{2v} a $2$-vertex,
\item \label{3v} a $3$-vertex adjacent to a $3$-vertex,
\item \label{4v} a $((\Delta-2)^-,(\Delta-1)^-,\Delta^-)$-vertex,
\item \label{5v} a $(3,3,(\Delta-1)^-,\Delta^-)$-vertex,
\item \label{6v} a $(3,4^-,4^-,4^-)$-vertex.
\end{enumerate}
\end{lemma}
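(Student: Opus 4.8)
The plan is to prove each item by the standard reducible-configuration method: assuming $H$ contains the configuration, we build a graph $H'$ that is smaller in $|E|+|V|$, invoke the minimality of $H$ to obtain an incidence $(\Delta+3,3)$-coloring $\phi'$ of $H'$ (recall $k=\Delta\ge 7$ here), and then extend $\phi'$ to all of $H$, contradicting that $H$ is a counterexample. For items~(\ref{1v}),~(\ref{2v}),~(\ref{4v}),~(\ref{5v}),~(\ref{6v}) I would take $H'=H\setminus\{u\}$, where $u$ is the low-degree vertex at the heart of the configuration (a $1$-, $2$-, $3$-, or $4$-vertex); for item~(\ref{3v}) I would instead delete the edge joining the two adjacent $3$-vertices, so that only the two incidences lying on that edge must be recolored. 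In every case $\mathrm{mad}(H')\le\mathrm{mad}(H)<4$ and $\Delta(H')\le\Delta$, so $\phi'$ exists. Extending then means assigning colors to the incidences on the edges at $u$ (or the two incidences on the deleted edge), using the forbidden-set bookkeeping of Definition~\ref{def}: an uncolored incidence $(u,uv)$ may take any of the $\Delta+3$ colors outside $\phi'(I_u)\cup\phi'(A_u)\cup\phi'(I_v)$, subject to the extra constraint that the weak incidences of each vertex span at most $3$ colors.

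Two bookkeeping facts drive the counting. First, for any vertex $v$ every strong incidence of $v$ is adjacent to every weak incidence of $v$, so the at most $3$ colors on the weak incidences of $v$ are disjoint from the $d(v)-1$ colors on its strong incidences; those three weak colors thus form a small reservoir lying outside the otherwise nearly saturated palette at a high-degree vertex. Second, two weak incidences of the same vertex are never adjacent, so the incidences $(v_i,v_iu)$ (which are exactly the weak incidences of $u$) impose no constraint among themselves. With these in hand the low-degree cases are routine: if $d(v_i)$ is small then coloring the strong incidence $(v_i,v_iu)$ first leaves about $\Delta+1-d(v_i)$ free colors, and coloring $(u,uv_i)$ afterwards leaves about $\Delta-1-d(v_i)$, both comfortably positive. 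This disposes of items~(\ref{1v}) and~(\ref{2v}), of item~(\ref{3v}) (both endpoints of the deleted edge have degree $3$, so their weak palettes have room for a fresh color), and of item~(\ref{6v}) (all neighbors have degree at most $4$).

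The substance is in items~(\ref{4v}) and~(\ref{5v}), where a neighbor may have degree exactly $\Delta$ and the naive count turns negative. Here one must exploit the ``$3$'': when a neighbor $v_i$ already carries three weak colors the incidence $(u,uv_i)$ is forced to reuse one of them, and when it carries fewer there is slack. I would color in decreasing order of neighbor degree, treating the largest neighbor while few of $u$'s own incidences are committed, and use the three-color reservoir to place each $(u,uv_i)$ at a saturated neighbor. The genuinely tight point, which I expect to be the main obstacle, is coloring the strong incidence $(v_i,v_iu)$ at a neighbor $v_i$ of degree exactly $\Delta$: its palette is then reduced to a single free color, and one must guarantee that the colors already chosen at the other neighbors do not steal it. This is precisely what forces the thresholds $\Delta-2,\Delta-1,\Delta$ on the neighbor degrees in item~(\ref{4v}) (and the two degree-$3$ neighbors in item~(\ref{5v})): the extra slack at the lower-degree neighbors supplies exactly the freedom needed to reserve that last color.

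The remaining work is a careful but finite case analysis according to how many neighbors are saturated (carry three weak colors) and how many attain degree $\Delta$, checking in each case that the incidences can be colored in an order, and with suitable choices at the smaller neighbors, keeping every weak palette within $3$ colors. I do not expect a new idea beyond this coordination; the risk lies purely in organizing the cases so that the last available color at a degree-$\Delta$ neighbor is always protected.
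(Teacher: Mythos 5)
Your overall framework (minimal counterexample, delete a vertex or an edge, extend the $(k+3,3)$-coloring via forbidden-set counts) is the same as the paper's, and your treatment of items 1, 2 and 3 is fine. The fatal gap is in items 5 and 6. There the central vertex $u$ has degree $4$, so it has \emph{four} weak incidences $(v_i,v_iu)$, and the $(k+3,3)$ invariant --- which your own counting depends on, since it is what caps the forbidden set at a degree-$\Delta$ neighbor by $(\Delta-1)+3$ --- allows at most $3$ colors on them. Hence two of these four incidences must receive the \emph{same} color, and this repetition has to be engineered, not hoped for: the incidences at the two high-degree neighbors have only $1$ and $2$ free colors and their lists need not meet. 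The paper does this by observing that the availability lists of the incidences at the two low-degree neighbors are large (size at least $k-1$ in item 5, at least $k-3$ in item 6, out of $k+3$ colors), so two of them intersect by pigeonhole, and it colors both incidences with a common color $\beta$; that is precisely what keeps $|\phi(A_u)|\le 3$. Nothing in your plan (``color in decreasing order of neighbor degree, with suitable choices at the smaller neighbors'') produces this repetition; in particular your claim that item 6 is routine because all neighbors have degree at most $4$ is wrong --- it needs the same trick. Without it your extension may be a proper incidence coloring, but it is not a $(k+3,3)$-coloring, so the minimality contradiction never materializes.

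Item 4 is also unfinished, and there your route genuinely diverges from the paper's. You delete the $3$-vertex $v$; the paper deletes the edge $vu_1$. Your ``tight point'' is real: when all three neighbors have saturated weak palettes, each $(v,vu_i)$ is confined to the $3$-set $\phi'(A_{u_i})$ while having to avoid the colors of all three $(u_j,u_jv)$ and of the other two $(v,vu_j)$, and a greedy order, even by decreasing degree, can get stuck. Your vertex-deletion variant can in fact be completed, but it requires a Hall-type system-of-distinct-representatives argument over the sets $\phi'(A_{u_i})$ together with an exceptional-case analysis (e.g.\ when $\phi'(A_{u_1})$ consists exactly of the forced color at the degree-$\Delta$ neighbor plus the two free colors of $(u_2,u_2v)$), which is more than ``coordination''. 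The paper resolves its own tight case ($\phi'(A_{u_1})\subseteq \phi'(I_v)\cup\phi'(A_v)$) differently: it \emph{recolors} incidences already colored in $\phi'$ (changing $(u_2,u_2v)$ or $(v,vu_2)$), i.e.\ it modifies $\phi'$ --- a tool you explicitly renounce when you assert that no idea beyond ordering and choice is needed. So as written, items 4, 5 and 6 all have gaps, with 5 and 6 being the ones your outline cannot repair without a new ingredient.
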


\begin{proof}

Each of these 6 cases will be dealt with similarly. First, we suppose by contradiction that the described configuration exists in $H$. Then we consider a graph $H'$  obtained from $H$ by deleting an edge or a vertex from $H$. The graph $H'$ has $\rm{mad}(H')\leq \rm{mad}(H)<4$. Due to the minimality of $H$, the graph $H'$  admits an incidence $(k'+3,3)$-coloring for any $k'\geq \max\{\Delta(H'),7\}$. Since $\Delta(H)\geq \Delta(H')$, the set of integers k' contains the set of integers k. Hence for the value $k'=k$, $H'$ admits an incidence $(k+3,3)$-coloring $\phi'$. Finally, for each case, we will prove a contradiction by extending $\phi'$ to an incidence $(k+3,3)$-coloring $\phi$ of $H$.\\


\begin{enumerate}

  \item Suppose $H$ contains a $1$-vertex $u$ and let $v$ be its unique neighbor in $H$. Consider $H'=H-\{u\}$. $H'$ admits an incidence $(k+3,3)$-coloring $\phi'$. We will extend $\phi'$ to an incidence $(k+3,3)$-coloring $\phi$ of $H$ as follows.\\
Since for all $v \in V(H')$, $\mid\phi'(A_v)\mid\leq 3$, we have $$\mid F_H^{\phi'}(v,vu)\mid = \mid\phi'(I_v)\cup\phi'(A_v)\cup\phi'(I_u)\mid\leq\Delta(H)-1+3+0=\Delta(H)+2\leq k+2$$ then there exists at least one color, say $\alpha$, such that $\alpha \notin F_H^{\phi'}(v,vu)$. Hence, we set $\phi(v,vu)=\alpha$ and one can observe that $\mid\phi'(A_u)\mid=1\le 3$. According to $\mid\phi'(A_v)\mid\leq 3$, it suffices to set $\phi(u,uv)= \beta$ for any color $\beta$ in $\phi'(A_v)$ and we are done.\\
We have extended the coloring, a contradiction.
  \item Suppose $H$ contains a $2$-vertex $v$ and let $u$, $w$ be the two neighbors of $v$ in $H$.  By minimality of $H$, $H'= H\setminus \{v\}$ admits an incidence $(k+3,3)$-coloring $\phi'$. We will extend $\phi'$ to an incidence $(k+3,3)$-coloring $\phi$ of $H$ as follows. \\
  $$\mid F_H^{\phi'}(w,wv)\mid = \mid\phi'(I_w)\cup\phi'(A_w)\cup\phi'(I_v)\mid\leq\Delta(H)-1+3+0=\Delta(H)+2\leq k+2$$ Hence there is one available color to color $(w,wv)$, let $\phi(w,wv)=\beta$. By doing the same calculation it is easy to see that there is one available color to color $(u,uv)$, let $\phi(u,uv)=\alpha$. We have $\mid\phi(A_v)\mid=2 \le 3$. Consider now the following cases:
\begin{enumerate}
\item If $\mid\phi'(A_u)\mid=3$ then we color $(v,vu)$ with a color $\gamma \in \phi(A_u)\setminus\{\beta\}$ 
\item If $\mid\phi'(A_u)\mid \le 2$ then we color $(v,vu)$ with a color $\gamma \notin F_H^{\phi'}(v,vu)$ different from $\beta$ (note that we have two choices). One can observe that $\mid\phi(A_u)\mid \le 3$.
\end{enumerate}

If $\mid\phi'(A_w)\mid= 3$ then we color $(v,vw)$ with a color $\zeta \in \phi(A_w)\setminus\{\alpha,\gamma\}$ and we have $\mid\phi(A_w)\mid= 3$. If $\mid\phi'(A_w)\mid\le 2$ then we color $(v,vw)$ with a color $\zeta \notin F_H^{\phi'}(v,vw)$ (different from  $\alpha,\gamma$) and we have $\mid\phi(A_w)\mid \le 3$. We have extended the coloring, a contradiction.

  \item Suppose $H$ contains a $3$-vertex $u$ adjacent to a $3$-vertex $v$. By minimality of $H$, $H'=H\setminus \{uv\}$ has an incidence $(k+3,3)$-coloring $\phi'$. We will extend $\phi'$ to an incidence $(k+3,3)$-coloring $\phi$ of $H$ as follows. 
  
  \begin{center}
 $\mid \phi'(A_u)\mid \leq 2$ and $\mid \phi'(A_v)\mid \leq 2$ 
  \end{center}
  $$\mid F_H^{\phi'}(u,uv)\mid = \mid\phi'(I_u)\cup\phi'(A_u)\cup\phi'(I_v)\mid\leq 2+2+2=6$$ 
We have at least $k-3 \geq 4$ free colors for $(u,uv)$. Choose a color for $(u,uv)$, then for $(v,vu)$ by using the same calculation at most 7 colors are forbidden for this incidence. We have at least $k-4 \geq 3$ free colors. That is more than enough to extend the coloring, a contradiction.

\item Suppose $H$ contains a $((\Delta-2)^-,(\Delta-1)^-,\Delta^-)$-vertex $v$ and let $u_1$, $u_2$ and $u_3$ be the three neighbors of $v$ in $H$ such that $d(u_1)\leq \Delta-2$, $d(u_2)\leq \Delta-1$ and $d(u_3)\leq \Delta$. Consider $H'=H-\{vu_1\}$. By minimality of $H$, $H'$ admits an incidence $(k+3,3)$-coloring $\phi'$. We will extend $\phi'$ to an incidence $(k+3,3)$-coloring $\phi$ of $H$ as follows.\\
\begin{enumerate}
 \item[$(1)$] If $\mid\phi'(A_{u_1})\mid\leq2$, then $\mid F_H^{\phi'}(v,vu_1)\mid = \mid\phi'(I_v)\cup\phi'(A_v)\cup\phi'(I_{u_1})\mid\leq 2+2 +\Delta(H)-3=\Delta(H)+1\leq k+1$ then there exists at least one color, say $\alpha$, such that $\alpha \notin F_H^{\phi'}(v,vu_1)$. Hence, we set $\phi(v,vu_1)=\alpha$ and we have $\mid\phi(A_{u_1})\mid\le 2+1=3$. For the incidence $(u_1,u_1v)$, we have $\mid F_H^{\phi'}(u_1,u_1v)\mid = \mid\phi'(I_{u_1})\cup\phi'(A_{u_1})\cup\phi'(I_v)\mid\leq \Delta(H)-3+3+2=\Delta(H) +2\leq k+2$. Then there exists at least one color $\beta \notin F_H^{\phi'}({u_1},{u_1}v)$. Hence we color the incidence $(u_1,u_1v)$ with $\beta$.
     
 \item [$(2)$] If $\mid\phi'(A_{u_1})\mid=3$. We distinguish two cases :
 \begin{itemize}
  \item[$Case~1.$] Suppose $\phi'(A_{u_1})\nsubseteq\phi'(I_v)\cup\phi'(A_v)$, then we have a color $\alpha \in \phi'(A_{u_1})\setminus\{\phi'(I_v)\cup\phi'(A_v)\}$. We color $(v,vu_1)$ with $\alpha$. 
  $$\mid F_H^{\phi'}(u_1,u_1v)\mid = \mid\phi'(I_v)\cup\phi'(A_{u_1})\cup\phi'(I_{u_1})\mid\leq 2+3+\Delta(H)-3=\Delta(H) +2 \leq k+2$$
   Hence, there exists at least one available color $\beta \notin F_H^{\phi'}(u_1,u_1vu)$, we set $\phi(u_1,u_1v)=\beta$ and we have $\mid\phi'(A_v)\mid\le 3$.
  \item[$Case~2.$] Suppose $\phi'(A_{u_1})\subseteq\phi'(I_v)\cup\phi'(A_v)$. W.l.o.g we assume that the available colors for $(v,vu_1)$ of $\phi'(A_{u_1})$ are the colors $1$, $2$ and $3$. We must notice that we have two available colors to color $(u_2,u_2v)$, if we do not take into account the constraints given by $\phi'(I_v)$. Then we have two cases to consider.
    \begin{enumerate}
    \item [$A.$]  $\phi'(v,vu_2)=1$, $\phi'(v,vu_3)=2$, $\phi'(u_2,u_2v)=3$.
        \begin{enumerate}
        \item [$1.$] If we can change the color of $(u_2,u_2v)$ in $\phi'$ and $\phi'(u_3,u_3v)\neq 3$, then we recolor the incidence $(u_2,u_2v)$ and we have $3\notin \phi'(I_v)\cup\phi'(A_v)$, then we proceed exactly as $case~1$.
        \item [$2.$] If we can change the color of $(u_2,u_2v)$ and $\phi'(u_3,u_3v)= 3$, then we recolor the incidence $(u_2,u_2v)$ with a color $a$ and it easy to see that $a\neq 1$ and $2$. The list of the available colors of $(v,vu_2)$ is $L=\phi'(A_{u_2})=\{1,\alpha,\beta\}$ and $3$ and $a$ are not belonging to $L$. We recolor $(v,vu_2)$ with a color different from $1$, $2$ and we have $1\notin \phi'(I_v)\cup\phi'(A_v)$, then we proceed exactly as $case~1$.
        \item [$3.$] If we cannot change the color of $(u_2,u_2v)$, it means that the available colors for $(u_2,u_2v)$  are $3$ and $2$. The  list of available colors for $(v,vu_2)$ is $L=\phi'(A_{u_2})=\{1,\alpha,\beta\}$ and $3$, $2$ are not belonging to $L$. We color $(v,vu_2)$ with  a color different from $1$ and $\phi'(u_3,u_3v)$. We have $1\notin \phi'(I_v)\cup\phi'(A_v)$, then we proceed exactly as $case~1$.
        \end{enumerate}
    \item [$B.$] $\phi'(v,vu_2)=1$, $\phi'(u_2,u_2v)=2$, $\phi'(u_3,u_3v)=3$ and $\phi'(v,vu_3)=b_3$ such that $b_3\notin \{1,2,3\}$.
       \begin{enumerate}
        \item [$1.$] If we can change the color of $(u_2,u_2v)$, then we recolor the incidence $(u_2,u_2v)$ and we have $2\notin \phi'(I_v)\cup\phi'(A_v)$, then we proceed exactly as $case~1$.
        \item [$2.$] If we cannot change the color of $(u_2,u_2v)$, it means that the available colors for $(u_2,u_2v)$  are $2$ and $b_3$. The  list of available colors for $(v,vu_2)$ is $L=\phi'(A_{u_2})=\{1,\alpha,\beta\}$ and $b_3$, $2$  are not belonging to $L$. We color $(v,vu_2)$ with  a color different from $1$ and $3$, we have $1\notin \phi'(I_v)\cup\phi'(A_v)$, then we proceed exactly as $case~1$.
        \end{enumerate} 
   \end{enumerate} 
 \end{itemize}
 \end{enumerate}
We have extended the coloring, a contradiction.

 
\item Assume that $H$ contains a $(3,3,(\Delta-1)^-,\Delta^-)$-vertex $u$. Let $u_1,u_2$ be the two neighbors of $u$ having a degree equal to 3. Let $v$ be the vertex of degree $\Delta^-$ and $w$ the neighbor of degree $(\Delta-1)^-$. We consider $H'=H \setminus \{u\}$. By minimality of $H$, $H'$ has an incidence $(k+3,3)$-coloring $\phi'$. By using the same computation as above we have: 

\begin{enumerate}
\item at least one free color for $(v,vu)$.
\item at least  $2$ free colors for $(w,wu)$.
\item at least  $k-1\geq 6$ free colors for $(u_i,u_iu)$, $i\in\{1,2\}$. We denote by $L_i$ the list of available colors  of  $(u_i,u_iu)$, $i\in\{1,2\}$. 
\item at least  $k+1 \geq 8$ free colors for $(u,uu_i)$, $i\in\{1,2\}$. 
\item a set of 3 free colors for $(u,uv)$ and a set of 3 free colors for $(u,uw)$ ($\phi'(A_{v})$ and $\phi'(A_{w}))$.

\end{enumerate}
Now we extend the coloring.\\
 We first color $(v,vu)$. We have one free color $\alpha$ for $(v,vu)$. We color $(v,vu)$ with $\alpha$.\\
Since $\mid L_1\mid=\mid L_2\mid=k-1$, there exists 
$\beta \in L_1\cap L_2$. We color $(u_i,u_iu)$, $i\in\{1,2\}$ with $\beta$. It is easy to see that in the new coloring we will have $\mid \phi(A_u) \mid \leq 3$.

Now we extend the coloring in the following order.

\begin{enumerate}
\item We color $(u,uw)$ with a color $\phi(u,uw)$ different from $\alpha$ and $\beta$.
\item We color $(u,uv)$ with a color  $\phi(u,uv)$ different from $\beta$ and $\phi(u,uw)$.
\item We color $(w,wu)$ with a color $\phi(w,wu)$ different from $\phi(u,uv)$.
\item We color $(u,uu_1)$ with a color $\phi(u,uu_1)$ different from 
 $\alpha$, $\phi(u,uw)$,  $\phi(u,uv)$, $\phi(w,wu)$.
 \item We color $(u,uu_2)$ with a color $\phi(u,uu_2)$ different from 
 $\alpha$, $\phi(u,uw)$,  $\phi(u,uv)$, $\phi(w,wu)$ and  
  $\phi(u,uu_1)$.
\end{enumerate}

 Hence we extend the coloring, a contradiction. This completes the proof.

 \item Suppose  $H$ contains a $(3,4^-,4^-,4^-)$-vertex $u$. Let $v$ be the neighbor of degree $3$, $u_1,u_2,u_3$ be the 3 neighbors of degree $4^-$.
 We consider $H'=H \setminus \{u\}$. By minimality of $H$, $H'$ has an incidence $(k+3,3)$-coloring $\phi'$.
  By an easy computation as above we have the following: 
  \begin{enumerate}
\item Let $L(v)$ be the list of free colors for $(v,vu)$.
Then $\mid L(v)\mid=k-1\geq 6$.

\item 
We denote by $L_i$ the list of available colors of each $(u_i,u_iu)$ ($i\in \{1,2,3\}$). Then $\mid L_i\mid=k-3\geq 4$.
\item for each $i\in \{1,2,3\}$ we have a set of 3 available colors to color $(u,uu_i)$ ($\phi'(A_{u_i})$ for $i\in \{1,2,3\}$).
\item We have $k +1\geq 8$ available colors to color $(u,uv)$.
\end{enumerate}
Since $\mid L_i\mid=k-3$ for $i\in \{1,2,3\}$. 
There exists at least two lists having an element in commun. W.l.o.g. assume that $\alpha \in L_1\cap L_2$. We color $(u_1,u_1u)$ and $(u_2,u_2u)$ with $\alpha$. It follows that  in  any cases $\mid\phi(A_v)\mid\leq 3$.\\
 We extend now the coloring in the following order:
 \begin{enumerate}
 \item $\phi(u,uu_3)$ will be a color different from $\alpha$.
  \item $\phi(u,uu_2)$ will be a color different from $\phi(u,uu_3)$.
 \item $\phi(u,uu_1)$ will be a color different from $\phi(u,uu_2)$ and $\phi(u,uu_3)$.
 \item $\phi(u_3,u_3u)$ will be a color different from $\phi(u,uu_2)$ and $\phi(u,uu_1)$.
  \item $\phi(v,vu)$ will be a color different from 
  $\phi(u,uu_1)$, $\phi(u,uu_2)$ and $\phi(u,uu_3)$.
 \item $\phi(u,uv)$ will be a color different from $\alpha$, $\phi(v,vu)$, $\phi(u_3,u_3u)$ and $\phi(u,u_i)$, $i\in \{1,2,3\}$.
\end{enumerate}

Hence we extend the coloring  in both cases, a contradiction. This completes the proof.

 
  \end{enumerate}
 
\end{proof}

\subsection{Discharging procedure}
We define the weight function $\omega: V(H) \rightarrow \mathbb{R}$ with $\omega(x)=d(x)-4$. It follows from the hypothesis on the maximum average degree that the total sum of weights is strictly negative. In the next step, we define discharging rules (R1) to (R3) and we redistribute weights and once the discharging is finished, a new weight function $\omega^\ast$ will be produced. During the discharging process the total sum of weights is kept fixed. Nevertheless, we can show that $\omega^\ast(x) \ge 0$ for all $x\,\in\,V(H)$. This leads to the following contradiction:
$$0\; \le \sum_{x\,\in\,V(H)}\;\omega^\ast(x)\;=\;\sum_{x\,\in\,V(H)}\;\omega(x)\;<\;0$$
and hence, this counterexample cannot exist.

The discharging rules are defined as follows:

\begin{enumerate}

\item[(R1)] Every $k$-vertex, for $k\geq 5$, gives $\frac{k-4}{k}$ to each adjacent $3$-vertex.
\item[(R2)] Every $k$-vertex, for $k\geq 5$, gives $\frac{k-4}{k}$ to each adjacent $4$-vertices having neighbors of degree 3.
\item[(R3)] Every $4$-vertex gives uniformly (in equal parts) its its new weight to its neighbors of degree 3. 


\end{enumerate}

\medskip

\noindent
Let $v\,\in\,V(H)$ be a $k$-vertex. By Lemma~\ref{lemma}.\ref{1v} and Lemma~\ref{lemma}.\ref{2v}, $k \ge 3$. Consider the following cases:

\begin{enumerate}
\item[] {\bf Case $\boldsymbol{k=4.}$} Observe that $\omega(v)=0$. 
By Lemma~\ref{lemma}.\ref{5v}, $v$ has at most 2 neighbors both of degree 3.  If $v$ has no neighbor of degree 3, it gets nothing and it gives nothing and $\omega^\ast(v)=0$. We have now to consider 2 cases: 
\begin{enumerate}
\item If it has only one neighbor of degree 3 it has at least one neighbor having a degree at least 5 by Lemma ~\ref{lemma}.\ref{6v}. Then it gets at least $\frac{1}{5}$ and it gives this weight to its neighbor of degree 3. We have $\omega^\ast(v)=0$.
\item If it has two neighbors of degree 3. Then  it has two neighbors with degree at least 7 by Lemma ~\ref{lemma}.\ref{5v}. Then it gets at least $2 \times \frac{3}{7}$, hence it gives at least $\frac{3}{7}$ to each of its neighbors of degree 3. We have $\omega^\ast(v)=0$.

\end{enumerate}
\item[] {\bf Case $\boldsymbol{k=3.}$} Observe that $\omega(v)=-1$.

\begin{enumerate}
\item $v$ has at most one  neighbor of degree 4, if it is the case, its two other neighbors have by  Lemma ~\ref{lemma}.\ref{4v} a degree at least 7.  Hence $\omega^\ast(v)\geq -1 + \frac{1}{5}+ 2 \times \frac{3}{7}= -1 + \frac{37}{35}=\frac{2}{35} > 0$
\item If $v$ has a neighbor of degree 5 by 
 Lemma ~\ref{lemma}.\ref{4v} the two other neighbors have a degree at least 7. Hence, $\omega^\ast(v)\geq -1 + \frac{1}{5}+ 2 \times \frac{3}{7}= -1 + \frac{37}{35}=\frac{2}{35} > 0$
 \item If $v$ has no neighbor of degree less or equal to 5, we have
 $\omega^\ast(v)\geq -1 + 3 \times \frac{1}{3}=0$
 \end{enumerate}
\item[] {\bf Case $\boldsymbol{k\geq 5.}$}
 The vertex $v$ satisfies $\omega^\ast(v)\ge k-4 -k\times\frac{k-4}{k}\geq 0$.

\end{enumerate}

After performing the discharging procedure the new weights of all vertices are positive and therefore, $H$ cannot exist. This completes the proof of Theorem \ref{SHAmad}.\ref{T2}.\\
\begin{remark}
In what follows we will assume, if we need it, that $\mid \phi'(A_v)\mid =3$. 
If it is not the case, it is easy to see that we can complete $\phi'(A_v)$ with colors not belonging to $\phi'(A_v) \cup \phi'(I_v)$ in order to have $\mid \phi'(A_v)\mid =3$.
\end{remark}

\section{Proof of Theorem \ref{SHAmad}.\ref{T2}}
\subsection{Structural properties}
We proceed by contradiction. We will use the same reasoning than in the proof of Theorem \ref{SHAmad}.\ref{T1}. Let $H$ be a counterexample to Theorem \ref{SHAmad}.\ref{T2} that minimizes $|E(H)|+|V(H)|$. By hypothesis there exists $k\geq \max\{\Delta(G),9\}$ such that $H$ does not admit an incidence $(k+4,4)$-coloring. Let $k\geq \max\{\Delta(G),9\}$ be the smallest integer such that $H$ does not admit an incidence $(k+4,4)$-coloring. By using Remark \ref{monotone} we must have $k= \max\{\Delta(G),9\}$. Moreover by minimality it is easy to see that $H$ is connected.\\
$H$ satisfies the following properties:

\begin{lemma}\label{lemma2}
$H$ does not contain:
\begin{enumerate}
\item \label{1v2} a $1$-vertex,
\item \label{2v2} a $2$-vertex,
\item \label{3v2} a $((\Delta-1)^-,\Delta^-,\Delta^-)$-vertex,
\item \label{4v2} a $4$-vertex adjacent to a $4^-$-vertex,
\item \label{45v2} a $(5^-,5^-,5^-,5^-)$-vertex.
\end{enumerate}
\end{lemma}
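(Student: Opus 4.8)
The plan is to reproduce, for the stronger parameters $(k+4,4)$, the contradiction-and-extension scheme already used for Lemma~\ref{lemma}. In each of the five items I assume the listed configuration occurs in the minimal counterexample $H$, delete a carefully chosen edge or vertex to obtain a graph $H'$ with $\mathrm{mad}(H')\le\mathrm{mad}(H)<\frac{9}{2}$, invoke minimality to get an incidence $(k+4,4)$-coloring $\phi'$ of $H'$ (legitimate since $k=\max\{\Delta(H),9\}$ and $\Delta(H)\ge\Delta(H')$), and then extend $\phi'$ to $H$, contradicting the choice of $H$. The two quantities to control throughout are the forbidden set $F_H^{\phi'}(\cdot)$ of Definition~\ref{def} and the weak-incidence cap $|\phi(A_v)|\le 4$; whenever convenient I assume $|\phi'(A_v)|=4$, padding $\phi'(A_v)$ with fresh colors as in the closing remark.

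Items~\ref{1v2} and~\ref{2v2} are immediate copies of items~\ref{1v} and~\ref{2v} of Lemma~\ref{lemma}: deleting the $1$- or $2$-vertex and bounding $|F_H^{\phi'}|\le\Delta(H)+3\le k+3$ leaves a free color for each recreated incidence, and since at most two weak incidences are added at any vertex the cap $|\phi(A_v)|\le 4$ is never violated. Item~\ref{4v2} is the analogue of item~\ref{3v} of Lemma~\ref{lemma}: deleting the edge $uv$ leaves both endpoints of degree at most $3$ in $H'$, so each of $(u,uv)$ and $(v,vu)$ has at least $k+4-9=k-5\ge 4$ free colors, and as each endpoint gains only one new weak incidence the cap holds automatically.

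Item~\ref{3v2} is the analogue of item~\ref{4v} of Lemma~\ref{lemma} and is where the argument becomes delicate. I delete the edge $vu_1$ to the neighbor $u_1$ of smallest degree and recolor $(v,vu_1)$ and $(u_1,u_1v)$. Colouring $(v,vu_1)$ is easy, since its forbidden set has size at most $2+2+(\Delta-2)=\Delta+2\le k+2$. The trouble is $(u_1,u_1v)$, whose vertex $u_1$ has degree close to $\Delta$: its forbidden set can reach roughly $\Delta+4\approx k+4$, and moreover respecting $|\phi(A_{u_1})|\le 4$ may force $(v,vu_1)$ to reuse a color of $\phi'(A_{u_1})$. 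When $\phi'(A_{u_1})$ is not full, a suitable choice of color for $(v,vu_1)$ (preferably inside $\phi'(A_{u_1})$) together with the right order of colouring suffices. The genuinely hard branch is $|\phi'(A_{u_1})|=4$, where I expect to reproduce the Case~$1$ / Case~$2$ dichotomy of item~\ref{4v} of Lemma~\ref{lemma}: either some color of $\phi'(A_{u_1})$ lies outside $\phi'(I_v)\cup\phi'(A_v)$ and can be reused on $(v,vu_1)$, or one performs the local recoloring swaps among the incidences at $u_2$ and $u_3$ to liberate such a color. This recoloring analysis, now with one more color and the cap raised to $4$, is the main obstacle of the proof.

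Finally, item~\ref{45v2} is treated as in item~\ref{6v} of Lemma~\ref{lemma}: delete the central $4$-vertex $u$ and color its eight incidences. Each strong incidence $(u_i,u_iu)$ has at least $k+4-8=k-4\ge 5$ free colors, while each $(u,uu_i)$ with $d(u_i)=5$ must draw its color from the four-element set $\phi'(A_{u_i})$ in order to keep $|\phi(A_{u_i})|\le 4$ (for $d(u_i)\le 4$ the choice is free). A pigeonhole count shows that two of the four free lists of the $(u_i,u_iu)$ share a color $\alpha$, since four lists of size at least $k-4$ cannot be pairwise disjoint in a palette of $k+4$ colors, as $4(k-4)>k+4$ for $k\ge 9$. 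Colouring those two weak incidences with $\alpha$ forces $|\phi(A_u)|\le 3$, after which a prescribed order of colouring lets every remaining $(u,uu_i)$ find a color inside its four-element palette. Verifying that this order never exhausts a palette is the careful final step.
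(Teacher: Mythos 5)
Your items \ref{1v2}, \ref{2v2} and \ref{4v2} are fine and coincide with the paper's treatment. The genuine gap is item \ref{3v2}, which is the core of the lemma and which you do not actually prove: you delete the edge $vu_1$ and then write that you ``expect to reproduce'' the dichotomy of Lemma~\ref{lemma}.\ref{4v} and that this recoloring analysis ``is the main obstacle of the proof''. That is a plan, not an argument, and the plan runs into a quantitative wall that the template of Lemma~\ref{lemma}.\ref{4v} cannot absorb. There the deleted edge went to a neighbour of degree at most $\Delta-2$, so the incidence $(u_1,u_1v)$ always had at most $(\Delta-3)+3+2=\Delta+2\le k+2$ forbidden colours out of $k+3$, hence a guaranteed free colour; the case analysis only had to rescue $(v,vu_1)$. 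In your setting $d(u_1)$ may equal $\Delta-1$, so in the branch $|\phi'(A_{u_1})|=4$ the forbidden set of $(u_1,u_1v)$ can reach $(\Delta-2)+4+2=\Delta+4=k+4$ (the sets $\phi'(I_{u_1})$, $\phi'(A_{u_1})$ and $\phi'(I_v)$ can be pairwise disjoint), i.e.\ it can fill the \emph{entire} palette, whatever order you colour in. So the recoloring you would need must also liberate a colour for $(u_1,u_1v)$, a situation that never occurs in Lemma~\ref{lemma}.\ref{4v}, and you give no argument for it. The paper avoids all of this by deleting the \emph{vertex} $v$ rather than the edge $vu_1$: then $\phi'(I_v)=\emptyset$, the incidence $(u_1,u_1v)$ has at most $(\Delta-2)+4=\Delta+2\le k+2$ forbidden colours, hence \emph{two} free colours $a,b$, and the whole difficulty collapses to one short swap (if $(v,vu_1)$ cannot be coloured then $\phi'(A_{u_1})=\{c,d,e,f\}$; if moreover $(v,vu_2)$ cannot be recoloured then $\phi'(A_{u_2})=\{a,d,e,f\}$, and recolouring $(u_1,u_1v)$ with $b$ frees $a$ for $(v,vu_2)$ and $e$ for $(v,vu_1)$).

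Item \ref{45v2} is also left unfinished at the decisive point, and the device you import is counterproductive. The pigeonhole colour $\alpha$ is pointless here: the central vertex has degree $4$ and the cap is $4$, so $|\phi(A_v)|\le 4$ holds automatically (that trick is needed in Lemma~\ref{lemma}, where the cap is $3$, not here). It is also not harmless: once two weak incidences carry $\alpha$, the last strong incidence $(v,vu_i)$ to be coloured can face four forbidden colours ($\alpha$ plus the three sibling colours) inside its four-element palette $\phi'(A_{u_i})$, so a careless order dead-ends --- and you explicitly defer ``verifying that this order never exhausts a palette''. The paper's order needs no pigeonhole at all: colour the four strong incidences $(v,vu_i)$ first, each losing at most three palette colours to its already-coloured siblings, then the weak incidences $(u_i,u_iv)$, each of which has at least $k-4\ge 5$ free colours and only three new forbidden ones, because the colour of $(v,vu_i)$ lies in $\phi'(A_{u_i})$ and was already excluded. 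Supplying that order and count is precisely what a proof of this item requires.
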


\begin{proof}
\begin{enumerate}
\item[1-2.] By using the same method as in the proof of Theorem \ref{SHAmad}.\ref{T1}, it is easy to prove the two first item of Lemma \ref{lemma2}. 

\item[3.] Suppose $H$ contains a $((\Delta-1)^-,\Delta^-,\Delta^-)$-vertex $v$ and let $u_1$, $u_2$ and $u_3$ be the three neighbors of $v$ in $H$ such that $d(u_1)\leq \Delta-1$, $d(u_2)\leq \Delta$ and $d(u_3)\leq \Delta$. Consider $H'=H-\{v\}$. By minimality of $H$, $H'$ admits an incidence $(k+4,4)$-coloring $\phi'$. We will extend $\phi'$ to an incidence $(k+4,4)$-coloring $\phi$ of $H$ as follows.\\

We have: 
$$\mid F_H^{\phi'}(u_1,u_1v)\mid = \mid\phi'(I_{u_1})\cup\phi'(A_{u_1})\cup\phi'(I_v)\mid\leq \Delta(H)-2+4+0=\Delta(H) +2\leq k+2$$
Hence there are 2 free colors $a$ and $b$ to color $(u_1,u_1v)$. For the incidences $(u_i,u_iv)$, $i\in \{2,3\}$ we have at least one color for each incidence. W.l.o.g we set $\phi(u_2,u_2v)=c$ and $\phi((u_3,u_3v)=d$. For each $(v,vu_i)$, $i \in\{1,2,3\}$, we have $4$ available colors: the colors of $\phi'(A_{u_i})$.\\
First we color $(u_1,u_1v)$ with the colors $a$. Then we color $(v,vu_2)$ with a color different from $a$ and $d$, we set $\phi(v,vu_2)=e$. We color $(v,vu_3)$ with  a color different from $a$, $c$ and $e$, we set $\phi(v,vu_2)=f$.
If we can color $(v,vu_1)$, we are done. So we cannot color $(v,vu_1)$, it means that $\phi'(A_{u_1})=\{c,d,e,f\}$ (See Figure 1). 

\begin{figure}[ht]
\centering
\includegraphics[width=3in]{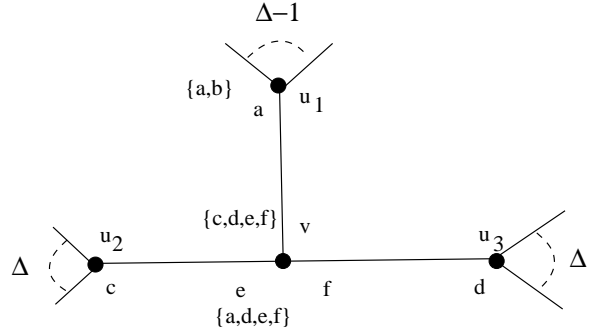}\\
\label{fig1}
\caption{$(\Delta-1,\Delta,\Delta)$-vertex}
\end{figure}

If we can change the color of $(v,vu_2)$ by taking one other available color of $(v,vu_2)$, we change 
the color of $(v,vu_2)$ and we give the color $e$ to $(v,vu_1)$ and we are done. Hence we cannot change the color of $(v,vu_2)$, it means that $\phi'(A_{u_2})=\{a,d,e,f\}$. Then we color $(u_1,u_1v)$ with $b$ ($b$ is not belonging to $\phi'(A_{u_1})=\{c,d,e,f\}$), we color $(v,vu_2)$ with $a$ and $(v,vu_1)$ with $e$, and we are done, a contradiction. This completes the proof.

\item[4.] Suppose $H$ contains $2$ adjacent 4 vertices $u$ and $v$.\\
Consider $H'=H-\{uv\}$. By minimality of $H$, $H'$ admits an incidence $(k+4,4)$-coloring $\phi'$. We will extend $\phi'$ to an incidence $(k+4,4)$-coloring $\phi$ of $H$ as follows.\\
 We have 
$$\mid F_H^{\phi'}(u,uv)\mid = \mid\phi'(I_{u})\cup\phi'(A_{u})\cup\phi'(I_v)\mid\leq 3+3+3=9 \leq k$$
and
$$\mid F_H^{\phi'}(v,vu)\mid = \mid\phi'(I_{v})\cup\phi'(A_{v})\cup\phi'(I_u)\mid\leq 3+3+3=9 \leq k$$
For each incidence we have $4$ free colors.
We can easly extend the coloration, a contradiction. This completes the proof.

\item[5.] Suppose $H$ contains a $(5^-,5^-,5^-,5^-)$-vertex $v$ and let $u_1$, $u_2$, $u_3$ and $u_4$ be the 4 neighbors of $v$ in $H$ such that $d(u_i)=5$, for $i \in\{1,2,3,4\}$. Consider $H'=H-\{v\}$. By minimality of $H$, $H'$ admits an incidence $(k+4,4)$-coloring $\phi'$. \\
 We have 
$$\mid F_H^{\phi'}(u_i,u_iv)\mid = \mid\phi'(I_{u_i})\cup\phi'(A_{u_i})\cup\phi'(I_v)\mid\leq 4+4=8 \leq k-1$$
Hence each incidence $(u_i,u_iv)$ has least $5$ available colors.\\
 Moreover each incidence $(v,vu_i)$ has $4$ available colors ($\phi'(A_{u_i})$).
We will extend $\phi'$ to an incidence $(k+4,4)$-coloring $\phi$ of $H$ as follows.\\
We first color the incidence $(v,vu_i)$ one after the other, next we color the incidences 
$(u_i,u_iv)$. We have $5$ available colors for each  $(u_i,u_iv)$ and $3$ new forbidden colors. This completes the proof.
 
 \end{enumerate}
\end{proof}

\subsection{Discharging procedure}
We define the weight function $\omega: V(H) \rightarrow \mathbb{R}$ with $\omega(x)=d(x)-\frac{9}{2}$. It follows from the hypothesis on the maximum average degree that the total sum of weights is strictly negative. In the next step, we define a discharging rule (R) and we redistribute weights and once the discharging is finished, a new weight function $\omega^\ast$ will be produced. During the discharging process the total sum of weights is kept fixed. Nevertheless, we can show that $\omega^\ast(x) \ge 0$ for all $x\,\in\,V(H)$. This leads to the following contradiction:
$$0\; \le \sum_{x\,\in\,V(H)}\;\omega^\ast(x)\;=\;\sum_{x\,\in\,V(H)}\;\omega(x)\;<\;0$$
and hence, this counterexample cannot exist.\\
We recall:
\begin{enumerate}
\item if $d(v)=3$, $\omega(v)=-\frac{3}{2}$,
\item if $d(v)=4$, $\omega(v)=-\frac{1}{2}$,
\item if $d(v)=5$, $\omega(v)=\frac{1}{2}$.
\end{enumerate}

The discharging rule is defined as follows:

\begin{enumerate}
\item[(R)] Every $k$-vertex ($k\geq 5$) gives $\frac{2k-9}{2k}$ to each of its neighbors having a degree less or equal to 4.
\end{enumerate}
We have to notice that a $k$-vertex do not give more that it has.
\noindent
Let $v\,\in\,V(H)$ be a $k$-vertex. 
By Lemma~\ref{lemma2}.\ref{1v2} and Lemma~\ref{lemma2}.\ref{2v2}, $k \ge 3$. Consider the following cases:

\begin{enumerate}
\item[] {\bf Case $\boldsymbol{k=3.}$} Observe that $\omega(v)=-\frac{3}{2}$. 
By Lemma~\ref{lemma2}.\ref{3v2}, $v$ has neighbors having a degree $\Delta \geq 9$. Hence we have:
$\omega^\ast(v)\geq -\frac{3}{2}+ 3\times\frac{1}{2}=0 $.
\item[] {\bf Case $\boldsymbol{k=4.}$} Observe that by  Lemma~\ref{lemma2}.\ref{4v2} 
and  Lemma~\ref{lemma2}.\ref{45v2}, it has at most 3 neighbors having degree 5. Hence: 
$\omega^\ast(v)\geq -\frac{1}{2}+ 3\times\frac{1}{10} +\frac{3}{12} =\frac{1}{20} \geq 0$.
\item[] {\bf Case $\boldsymbol{k\geq 5.}$} The vertex $v$ satisfies $\omega^\ast(v) \geq k-\frac{9}{2}-k\times\frac{2k-9}{2k}\geq 0$.
\end{enumerate}


After performing the discharging procedure the new weights of all vertices are positive and therefore, $H$ cannot exist. This completes the proof of Theorem \ref{SHAmad}.\ref{T2}.

\section{Proof of Theorem \ref{SHAmad}.\ref{T3}}
\subsection{Structural properties}
We proceed by contradiction. We use the same reasonnig as above. Let $H$ be a counterexample to Theorem \ref{SHAmad}.\ref{T3} that minimizes $|E(H)|+|V(H)|$. By hypothesis there exists $k\geq \max\{\Delta(G),9\}$ such that $H$ does not admit an incidence $(k+5,5)$-coloring. Let $k\geq \max\{\Delta(G),9\}$ be the smallest integer such that $H$ does not admit an incidence $(k+5,5)$-coloring. By using Remark \ref{monotone} we must have $k= \max\{\Delta(G),9\}$. By minimality it is easy to see that $H$ is connected.\\
$H$ satisfies the following properties:
\begin{lemma}\label{lemma3}
$H$ does not contain:
\begin{enumerate}
\item \label{1v3} a $1$-vertex,
\item \label{2v3} a $2$-vertex,
\item \label{3v3} a $3$-vertex,
\item \label{4v3} a $4$-vertex adjacent to a $5^-$-vertex,
\item \label{46v3} a $4$-vertex adjacent to two $6$-vertices.
\end{enumerate}
\end{lemma}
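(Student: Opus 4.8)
The plan is to follow the template already used for Lemma~\ref{lemma} and Lemma~\ref{lemma2}: assume the forbidden configuration occurs in the minimal counterexample $H$, delete a well-chosen vertex or edge to obtain a smaller graph $H'$ with ${\rm mad}(H')<5$, invoke minimality to get an incidence $(k+5,5)$-coloring $\phi'$ of $H'$, and then extend $\phi'$ to $H$, contradicting the choice of $H$. Throughout I would lean on three elementary facts: the weak incidences of a fixed vertex are pairwise non-adjacent (so they may freely repeat colors); at any vertex the strong and weak incidence colors form disjoint sets; and a strong incidence $(v,vu)$ pointing at $u$ is, by the $\ell=5$ constraint, confined to the at most five colors of $\phi'(A_u)$ --- \emph{unless} $u$ has small enough degree that fewer than five colors occur on its weak incidences, in which case a brand-new color is permitted. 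Exactly as in the earlier sections, I may first pad the weak palette of any vertex to its full size $5$, which is legitimate since at most $5+(\Delta-1)\le k+4<k+5$ colors are ever blocked at a vertex.

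Items~\ref{1v3} and \ref{2v3} are routine and go exactly as in Lemma~\ref{lemma} and Lemma~\ref{lemma2}. For the $3$-vertex of item~\ref{3v3} I would delete $v$, with neighbors $u_1,u_2,u_3$, and color in two rounds. First color the three weak incidences $(u_i,u_iv)$: while $I_v$ is still empty each forbids at most $5+(d(u_i)-1)\le k+4<k+5$ colors, and since they are pairwise non-adjacent they can be colored independently. Then color the strong incidences $(v,vu_i)$, each from the five-color palette $\phi'(A_{u_i})$; the only losses are the at most two colors of $\phi(A_v)$ lying in that palette (the color of $(u_i,u_iv)$ already avoids $\phi'(A_{u_i})$) together with the requirement that the three be pairwise distinct, so greedily they have $3,2,1$ choices and the extension completes.

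For item~\ref{4v3}, let $u$ be a $4$-vertex adjacent to a $5^-$-vertex $w$, and delete the edge $uw$. The point is that $d(w)\le 5$ forces $|\phi'(A_w)|\le 4$, so adding the weak incidence $(u,uw)$ of $w$ keeps its palette of size at most $5$ \emph{without} confining $(u,uw)$ to a five-color set: it only has to dodge the at most $3+3+(d(w)-1)\le 10$ colors of $\phi(A_u)\cup\phi(I_u)\cup\phi(I_w)$, leaving several choices among the $k+5\ge 14$ colors. The symmetric incidence $(w,wu)$ then forbids at most $5+4+4=13<14$ colors, which finishes the extension.

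Item~\ref{46v3} is where I expect the real difficulty, and it is the crux of the lemma. Let $u$ be a $4$-vertex with two $6$-neighbors $w_1,w_2$ and two further neighbors $x_1,x_2$, and delete $u$. Because $x_1,x_2$ may have degree as large as $\Delta$, their weak incidences $(x_j,x_ju)$ must be colored \emph{before} any strong incidence of $u$ (otherwise the four colors of $\phi(I_u)$, on top of the up to $k+4$ colors of $\phi(A_{x_j})\cup\phi(I_{x_j})$, push the count above $k+5$); but coloring all four weak incidences of $u$ first pollutes every five-color list $\phi'(A_y)$ by up to four colors, leaving the four pairwise-adjacent strong incidences $(u,uy)$ with lists too short to admit a system of distinct representatives. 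Unlike the $5^-$-neighbor of item~\ref{4v3}, the $6$-neighbors give no automatic slack, since $|\phi'(A_{w_i})|$ may equal $5$. My plan is therefore to exploit the two degrees of freedom created by deleting $u$: first choose the colors of $(w_1,w_1u),(w_2,w_2u)$ (each has at least $k-5\ge 4$ options) so as to coincide with the colors forced on $(x_1,x_1u),(x_2,x_2u)$, shrinking $|\phi(A_u)|$ enough to recover a system of distinct representatives; and where this still fails, recolor a weak incidence of $w_1$ or $w_2$ in $H'$ (both have degree $5$ there) to diversify the relevant palette and release a representative, in the spirit of the recoloring subcases of Lemma~\ref{lemma}.\ref{4v} and Lemma~\ref{lemma2}.\ref{3v2}. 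I expect this step to require an explicit and somewhat tedious subcase analysis, and it is the place most likely to hide a trap.
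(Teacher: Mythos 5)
Items \ref{1v3}--\ref{4v3} of your proposal are correct and follow essentially the same deletions, coloring orders and counts as the paper. The genuine gap is in item \ref{46v3}, which you rightly call the crux but then misdiagnose. You claim that ``the $6$-neighbors give no automatic slack, since $|\phi'(A_{w_i})|$ may equal $5$'', and on that basis you fall back on an unproven coincidence-plus-recoloring scheme. But the slack relevant to the weak incidence $(w_i,w_iu)$ is not the size of the palette $\phi'(A_{w_i})$; it is the degree of $w_i$ in $H'=H-\{u\}$. Since $d_{H'}(w_i)=5$, the incidence $(w_i,w_iu)$ is blocked by at most $|\phi'(I_{w_i})|+|\phi'(A_{w_i})|\le 5+5=10$ colors, so it has at least $k-5\ge 4$ free colors that are \emph{not} confined to any $5$-set --- the same kind of counting slack you already used in item \ref{4v3}. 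This is exactly what the paper exploits: color $(x_1,x_1u)=a$ and $(x_2,x_2u)=b$ first; then color the four strong incidences of $u$ from their palettes, in the order $(u,uw_1),(u,uw_2),(u,ux_1),(u,ux_2)$, avoiding $\{a,b\}$ and the previously chosen strong colors (note $a\notin\phi'(A_{x_1})$ and $b\notin\phi'(A_{x_2})$ automatically, so the lists have sizes at least $3,2,2,1$ and a greedy choice works); and only \emph{then} color $(w_1,w_1u)$ and $(w_2,w_2u)$, each of which still has at least $(k+5)-5-5-3=k-8\ge 1$ admissible colors. Your premise that all four weak incidences of $u$ must be placed before any strong incidence of $u$ is precisely what fails, and dropping it dissolves the system-of-distinct-representatives problem entirely.

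Moreover, your fallback plan does not close the gap on its own terms. Forcing the four weak colors to coincide in pairs only guarantees that each list $\phi'(A_y)\setminus\phi(A_u)$ has size at least $3$, and four lists of size $3$ drawn from possibly identical palettes can all be equal, so Hall's condition can still fail; you would then genuinely need the deferred recoloring analysis, for which you give no argument that a weak incidence of $w_1$ or $w_2$ can always be recolored, nor that doing so releases a representative. As written, item \ref{46v3} is therefore not proved; the ordering argument above is both simpler and complete.
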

\begin{proof}
\begin{enumerate}
\item[1-2.] By using the same method as in the proof of Theorem \ref{SHAmad}.\ref{T1}, it is easy to prove the two first item of Lemma \ref{lemma3}. 
\item[3.] Suppose $H$ contains a 3-vertex $v$ and let $u_1,u_2,u_3$ be the  neighbors of $v$. Consider $H'=H-\{v\}$. By minimality of $H$, $H'$ admits an incidence $(k+5,5)$-coloring $\phi'$. We will extend $\phi'$ to an incidence $(k+5,5)$-coloring $\phi$ of $H$ as follows.\\
Each incidence $(u_i,u_iv)$ has one avaible color, we color each incidence $(u_i,u_iv)$ with this available color. Each incidence $(v,vu_i)$ has $5$ available colors ($\phi'(A_{u_i})$).We color each incidence $(v,vu_i)$ one after in other to have an incidence coloring. Hence we extend the coloring, a contradition.
\item[4.] Suppose $H$ contains a $4$-vertex $v$ adjacent to a $5^-$-vertex $u$. Consider $H'=H-\{uv\}$. By minimality of $H$, $H'$ admits an incidence $(k+5,5)$-coloring $\phi'$. We will extend $\phi'$ to an incidence $(k+5,5)$-coloring $\phi$ of $H$ as follows.\\
It is easy to see that: 
\begin{enumerate}
\item the incidence $(v,vu)$ has at least $k+5-10=k-5 \geq 4$ free colors,
\item the incidence $(u,uv)$ has at least $k+5-11=k-6 \geq 3$ free colors.
\end{enumerate}
It is easy to see that we can extend the coloring. The proof is left to the reader.
\item[5.] Suppose $H$ contains a 4-vertex $v$ having two neighbors $u_1$ and $u_2$ with 
$d(u_1)=d(u_2)= 6$. Let $u_3$ and $u_4$ be the two other neighbors.
Consider $H'=H-\{v\}$. By minimality of $H$, $H'$ admits an incidence $(k+5,5)$-coloring $\phi'$. We will extend $\phi'$ to an incidence $(k+5,5)$-coloring $\phi$ of $H$ as follows.\\
 By using the same computation as above we have: 

\begin{enumerate}
\item at least one free color for $(u_i,u_iv)$, $i\in \{3,4\}$. We color $(u_i,u_iv)$ with this free color. We set  $\phi(u_3,u_3v)=a$ and  $\phi(u_4,u_4v)=b$.
\item a set of 5 free colors for $(v,vu_i)$, $i\in \{1,2,3,4\}$ ($\phi'(A_{u_i})$, $i\in \{1,2,3,4\}$). We color $(v,vu_1)$ with a color different from $a$ and $b$.
We color $(v,vu_2)$ with a color different from $a$, $b$ and $\phi(v,vu_1)$.
We color $(v,vu_3)$ with a color different from $b$, $\phi(v,vu_1)$ and $(v,vu_2)$.
We color $(v,vu_4)$ with a color different from $a$, $\phi(v,vu_1)$, $\phi(v,vu_2)$ and $\phi(v,vu_3)$.
\item at least  $k-5$ free colors for $(u_i,u_iv)$, $i\in \{1,2\}$, so at least $4$ free colors. We color $(u_1,u_1v)$ with a color different from  $\phi(v,vu_2)$, $\phi(v,vu_3)$ and $\phi(v,vu_4)$ and we color $(u_2,u_2v)$ with a color different from  $\phi(v,vu_1)$, $\phi(v,vu_3)$ and $\phi(v,vu_4)$. 
\end{enumerate}
Hence we have extended the coloration to the whole graph $H$, a contradiction. This completes the proof.
\end{enumerate}
\end{proof}
\subsection{Discharging procedure}
We define the weight function $\omega: V(H) \rightarrow \mathbb{R}$ with $\omega(x)=d(x)-5$. It follows from the hypothesis on the maximum average degree that the total sum of weights is strictly negative. In the next step, we define a discharging rule (R) and we redistribute weights and once the discharging is finished, a new weight function $\omega^\ast$ will be produced. During the discharging process the total sum of weights is kept fixed. Nevertheless, we can show that $\omega^\ast(x) \ge 0$ for all $x\,\in\,V(H)$. This leads to the following contradiction:
$$0\; \le \sum_{x\,\in\,V(H)}\;\omega^\ast(x)\;=\;\sum_{x\,\in\,V(H)}\;\omega(x)\;<\;0$$
and hence, this counterexample cannot exist.\\
We recall that we have only one negative weight: $d(v)=4$, $\omega(v)=-1$.\\

The discharging rule is defined as follows:

\begin{enumerate}
\item[(R)] Every $k$-vertex ($k\geq 6$) gives $\frac{k-5}{k}$ to each of its neighbors having a degree equal to 4.
\end{enumerate}

We have to notice that a $k$-vertex ($k\geq 6$) does not give more that it has.
\noindent
Let $v\,\in\,V(H)$ be a $k$-vertex. By Lemma~\ref{lemma3}.\ref{1v3}, Lemma~\ref{lemma3}.\ref{2v3} and Lemma~\ref{lemma3}.\ref{3v3},   $k \ge 4$. Consider the following cases:

\begin{enumerate}
\item[] {\bf Case $\boldsymbol{k=4.}$} Observe that $\omega(v)=-1$. 
By Lemma~\ref{lemma3}.\ref{4v3} the neighbors of $v$ have a degree at least 6 and by Lemma~\ref{lemma3}.\ref{46v3}, $v$ has at most one  neighbor having a degree equal to $6$. Hence we have:
$\omega^\ast(v)\geq -1+ 3\times\frac{2}{7}+\frac{1}{6} =-1+\frac{43}{42}= \frac{1}{42}\geq 0$.
\item[] {\bf Case $\boldsymbol{k\geq 5.}$} The vertex $v$ satisfies $\omega^\ast(v) \geq k-5-k\times\frac{k-5}{k}\geq 0$.

\end{enumerate}

After performing the discharging procedure the new weights of all vertices are positive and therefore, $H$ cannot exist. This completes the proof of Theorem \ref{SHAmad}.\ref{T3}.

\section{Proof of Theorem \ref{SHAmad}.\ref{T4}}
\subsection{Structural properties}
We proceed by contradiction. We use the same reasoning than prevously. Let $H$ be a counterexample to Theorem \ref{SHAmad}.\ref{T4} that minimizes $|E(H)|+|V(H)|$. By hypothesis there exists $k\geq \max\{\Delta(G),7\}$ such that $H$ does not admit an incidence $(k+6,6)$-coloring. Let $k\geq \max\{\Delta(G),7\}$ be the smallest integer such that $H$ does not admit an incidence $(k+6,6)$-coloring. By using Remark \ref{monotone} we must have $k= \max\{\Delta(G),7\}$. By minimality it is easy to see that $H$ is connected.\\
$H$ satisfies the following properties:
\begin{lemma}\label{lemma4}
$H$ does not contain:
\begin{enumerate}
\item \label{1v4} a $1$-vertex,
\item \label{2v4} a $2$-vertex,
\item \label{3v4} a $3$-vertex,
\item \label{4v4} a $((\Delta-1)^-,\Delta^-,\Delta^-,\Delta^-)$-vertex.
\end{enumerate}
\end{lemma}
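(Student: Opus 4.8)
The plan is to follow the uniform reducibility scheme used for Lemmas~\ref{lemma}, \ref{lemma2} and \ref{lemma3}: assume one of the four configurations occurs in the minimal counterexample $H$, delete a suitable vertex or edge to obtain a smaller graph $H'$ with $\mathrm{mad}(H')\le \mathrm{mad}(H)<6$ and $\Delta(H')\le \Delta(H)$, invoke minimality together with Remark~\ref{monotone} to get an incidence $(k+6,6)$-coloring $\phi'$ of $H'$ with $k=\max\{\Delta(H),7\}$, and then derive a contradiction by extending $\phi'$ to an incidence $(k+6,6)$-coloring of $H$. Throughout, for each uncolored incidence one compares $|F_H^{\phi'}(\cdot)|$ against the palette of $k+6$ colors, while keeping track of the weak-incidence budget $|\phi(A_w)|\le 6$ at every vertex $w$.

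Items~\ref{1v4} and~\ref{2v4} are handled exactly as the corresponding items in the proof of Theorem~\ref{SHAmad}.\ref{T1}: deleting the low-degree vertex leaves at most $(\Delta-1)+6=k+5$ forbidden colors on each incidence to be created, so a free color always exists, and the weak-incidence budgets $|\phi(A_w)|\le 6$ are maintained by the same local choices as in that proof. For Item~\ref{3v4} I would delete the $3$-vertex $v$ and first color its three weak incidences $(u_i,u_iv)$, each of which has $|F_H^{\phi'}(u_i,u_iv)|\le(\Delta-1)+6=k+5$ and hence at least one free color; I would then color the three strong incidences $(v,vu_i)$ greedily from the lists $\phi'(A_{u_i})$ of size $6$, each avoiding at most three weak colors and at most two previously chosen strong colors. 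Since $5<6$ this always succeeds, and because $v$ has only three weak incidences its budget is at most $3\le 6$.

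The core of the lemma, and the step I expect to be the main obstacle, is Item~\ref{4v4}. Deleting the $4$-vertex $v$ with neighbors $u_1$ (of degree at most $\Delta-1$) and $u_2,u_3,u_4$ (of degree at most $\Delta$), one checks that $(u_1,u_1v)$ has at least two free colors $\{a,b\}$, that each $(u_i,u_iv)$ with $i\in\{2,3,4\}$ has at least one free color, and that each strong incidence $(v,vu_i)$ may be colored from the $6$-element list $\phi'(A_{u_i})$. The difficulty is that every strong incidence at $v$ is adjacent both to all four weak incidences $(u_j,u_jv)$ and to the other three strong incidences, so a naive greedy coloring of the last strong incidence would have to avoid up to $4+3=7$ colors out of a list of only $6$: the extension is tight by exactly one color.

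To overcome this I would first color the weak incidences, setting $\phi(u_1,u_1v)=a$ and keeping the second free color $b$ in reserve, then color $(v,vu_2),(v,vu_3),(v,vu_4)$ and finally attempt $(v,vu_1)$; when this last incidence cannot be colored I would argue, as in the $(\Delta-1,\Delta,\Delta)$-vertex analysis of Lemma~\ref{lemma2}, that the obstruction forces $\phi'(A_{u_1})$ (and possibly $\phi'(A_{u_2})$) to coincide with a specific set of already-used colors, whereupon recoloring $(u_1,u_1v)$ with the reserve color $b$ liberates a color that can be shifted along $(v,vu_2)$ and $(v,vu_1)$ to complete the coloring. The extra free color guaranteed at the neighbor $u_1$ of degree at most $\Delta-1$ supplies precisely the one unit of slack needed to absorb the deficit, which is exactly why the configuration is reducible; the remaining work is a finite case analysis over the few ways in which the greedy extension can fail, entirely analogous to, though slightly longer than, the three-neighbor argument carried out in Lemma~\ref{lemma2}.
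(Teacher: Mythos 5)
Your proposal is correct and follows essentially the same route as the paper: the paper likewise dispatches items 1--3 by reference to the earlier greedy arguments, and for item 4 it colors the weak incidences first (keeping the second free color $e$ of $(u_1,u_1v)$ in reserve), colors $(v,vu_2),(v,vu_3),(v,vu_4)$ greedily from the lists $\phi'(A_{u_i})$, and, when $(v,vu_1)$ is blocked, deduces $\phi'(A_{u_1})=\{1,2,3,a,b,c\}$, then either recolors $(v,vu_2)$ directly or deduces $\phi'(A_{u_2})=\{1,2,3,b,c,d\}$ and performs exactly your reserve-and-shift swap ($\phi(u_1,u_1v)=e$, $\phi(v,vu_2)=d$, $\phi(v,vu_1)=1$). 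The only slip is cosmetic: Theorem~\ref{SHAmad}.\ref{T4} assumes $\mathrm{mad}(H)<5$, not $<6$, but this changes nothing since deletion still gives $\mathrm{mad}(H')\le \mathrm{mad}(H)<5$.
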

\begin{proof}
\begin{enumerate}
\item[1-3.] By using the same method as in the proof of Theorem \ref{SHAmad}.\ref{T1} and Theorem \ref{SHAmad}.\ref{T3}, it is easy to prove the three first item of Lemma \ref{lemma4}. 
\item[4.] Suppose $H$ contains a $((\Delta-1)^-,\Delta^-,\Delta^-,\Delta^-)$-vertex $v$ and let $u_1$, $u_2$, $u_3$ and $u_4$ be the four neighbors of $v$ in $H$ such that $d(u_1)\leq \Delta-1$, $d(u_i)\leq \Delta$ for $i\in \{2,3,4\}$. Consider $H'=H-\{v\}$. By minimality of $H$, $H'$ admits an incidence $(k+6,6)$-coloring $\phi'$. We will extend $\phi'$ to an incidence $(k+6,6)$-coloring $\phi$ of $H$ as follows.\\
By using the same computation as above we have: 

\begin{enumerate}
\item for each $(u_i,u_iv)$, $i\in\{2,3,4\}$, there is at least one free color.
\item at least two free colors for $(u_1,u_1v)$,

\item at least  $6$ free colors for $(v,vu_i)$, $i\in\{1,2,3,4\}$  ($\phi'(A_{u_i})$,$i\in\{1,2,3,4\}$).
\end{enumerate}
We extend the incidence coloring as follows:
\begin{enumerate}
\item We color each  $(u_i,u_iv)$, $i\in\{2,3,4\}$ with its free color. We set: 
$\phi(u_2,u_2v)=a$, $\phi(u_3,u_3v)=b$ and $\phi(u_4,u_4v)=c$.
\item W.l.o.g. we assume that $\{d,e\}$ are the available colors for $(u_1,u_1v)$. We set $\phi(u_1,u_1v)=d$.
\item We color $(v,vu_2)$ with a color different from $b$, $c$ and $d$, we set $\phi(v,vu_2)=1$.
\item We color $(v,vu_3)$ with a color different from $a$, $c$, $d$ and $\phi(v,vu_2)=1$. We set $\phi(v,vu_3)=2$.
\item We color $(v,vu_4)$ with a color different from $a$, $b$, $d$, $\phi(v,vu_2)=1$ and $\phi(v,vu_3)=2$. We set $\phi(v,vu_4)=3$.
\item If we have one available color for $(v,vu_1)$ among its 6 free colors. We are done.
\item If we cannot color $(v,vu_1)$, it means that the list of free colors of  $(v,vu_1)$ is $\{1,2,3,a,b,c\}=\phi'(A_{u_1})$. If we can take an other free color for $(v,vu_2)$, without destroying our incidence coloring, we change the color of $(v,vu_2)$ and color $(v,vu_1)$ with $1$. Hence we cannot change the color of $(v,vu_2)$. So the set of free colors of  $(v,vu_2)$ is $\{1,2,3,b,c,d\}=\phi'(A_{u_2})$.
\item We set  $\phi(u_1,u_1v)=e\notin \{1,2,3,a,b,c\}=\phi'(A_{u_1})$, $\phi(v,vu_2)=d$ and $\phi(v,vu_1)=1$ and we are done. 
\end{enumerate}
We have extended the coloring to $H$, a contradiction. This completes the proof.


\end{enumerate}
\end{proof}
\subsection{Discharging procedure}
We define the weight function $\omega: V(H) \rightarrow \mathbb{R}$ with $\omega(x)=d(x)-5$. It follows from the hypothesis on the maximum average degree that the total sum of weights is strictly negative. In the next step, we define a discharging rule (R) and we redistribute weights and once the discharging is finished, a new weight function $\omega^\ast$ will be produced. During the discharging process the total sum of weights is kept fixed. Nevertheless, we can show that $\omega^\ast(x) \ge 0$ for all $x\,\in\,V(H)$. This leads to the following contradiction:
$$0\; \le \sum_{x\,\in\,V(H)}\;\omega^\ast(x)\;=\;\sum_{x\,\in\,V(H)}\;\omega(x)\;<\;0$$
and hence, this counterexample cannot exist.\\
We recall that we have only one negative weight: $d(v)=4$, $\omega(v)=-1$.\\

The discharging rule is defined as follows:

\begin{enumerate}
\item[(R)] Every $k$-vertex ($k\geq 6$) gives $\frac{k-5}{k}$ to each of its neighbors having a degree equal to 4.
\end{enumerate}

We have to notice that a $k$-vertex ($k\geq 6$) does not give more that it has.
\noindent
Let $v\,\in\,V(H)$ be a $k$-vertex. By Lemma~\ref{lemma4}.\ref{1v4}, Lemma~\ref{lemma4}.\ref{2v4} and Lemma~\ref{lemma4}.\ref{3v4}, $k \ge 4$. Consider the following cases:

\begin{enumerate}
\item[] {\bf Case $\boldsymbol{k=4.}$} Observe that $\omega(v)=-1$. 
By Lemma~\ref{lemma4}.\ref{4v4}, $v$ has four  neighbors of degree equal to $\Delta \geq 7$. Hence we have:
$\omega^\ast(v)\geq -1+ 4\times\frac{2}{7}= \frac{1}{7}\geq 0$.

\item[] {\bf Case $\boldsymbol{k\geq 5.}$} The vertex $v$ satisfies $\omega^\ast(v) \geq k-5-k\times\frac{k-5}{k}\geq 0$.
\end{enumerate}

After performing the discharging procedure the new weights of all vertices are positive and therefore, $H$ cannot exist. Hence we have proved the first sentance of Theorem \ref{SHAmad}.\ref{T4}. Now by Theorem \ref{borne} if $\Delta \leq 6$, $\chi_i(G)\leq 2\Delta \leq \Delta +6$. This completes the proof of Theorem \ref{SHAmad}.\ref{T4}.

\section{Proof of Theorem \ref{SHAmad}.\ref{T5}}
\subsection{Structural properties}
We proceed by contradiction. We use the same reasoning than prevously. Let $H$ be a counterexample to Theorem \ref{SHAmad}.\ref{T5} that minimizes $|E(H)|+|V(H)|$. By hypothesis there exists $k\geq \max\{\Delta(G),12\}$ such that $H$ does not admit an incidence $(k+6,6)$-coloring. Let $k\geq \max\{\Delta(G),12\}$ be the smallest integer such that $H$ does not admit an incidence $(k+6,6)$-coloring. By using Remark \ref{monotone} we must have $k= \max\{\Delta(G),12\}$. By minimality it is easy to see that $H$ is connected.\\
$H$ satisfies the following properties:
\begin{lemma}\label{lemma5}
$H$ does not contain:
\begin{enumerate}
\item \label{1v5} a $1$-vertex,
\item \label{2v5} a $2$-vertex,
\item \label{3v5} a $3$-vertex,
\item \label{4v5} a $((\Delta-1)^-,\Delta^-,\Delta^-,\Delta^-)$-vertex,
\item \label{58v5} a $(8^-,8^-,8^-,\Delta^-,\Delta^-)$-vertex.
\end{enumerate}
\end{lemma}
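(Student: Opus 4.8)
The plan is to follow the template of the earlier lemmas. I assume $H$ is the minimal counterexample already fixed, so that for every proper subgraph $H'$ (which satisfies $\mathrm{mad}(H')\le\mathrm{mad}(H)<6$) I may take an incidence $(k+6,6)$-coloring $\phi'$, where $k=\max\{\Delta(H),12\}$. For each forbidden configuration I suppose it occurs, delete a suitable vertex or edge to obtain such an $H'$, and then extend $\phi'$ back to $H$, contradicting minimality. Throughout I use the fill-up convention from the remark following the proof of Theorem~\ref{SHAmad}.\ref{T1}, assuming $|\phi'(A_w)|=6$ whenever $w$ has at least six weak incidences.

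Items 1--3 (no $1$-, $2$-, or $3$-vertex) are handled exactly as in Theorem~\ref{SHAmad}.\ref{T1} and Theorem~\ref{SHAmad}.\ref{T3}: after deleting the low-degree vertex, each incidence $(u_i,u_iv)$ keeps at least $(k+6)-(\Delta+5)\ge 1$ free color and each $(v,vu_i)$ keeps the six candidates $\phi'(A_{u_i})$, which suffices to extend greedily. Item 4 is the very configuration of Lemma~\ref{lemma4}.\ref{4v4} under the same $(k+6,6)$-regime, so its proof — including the recoloring fallback used when the list of $(v,vu_1)$ coincides with $\phi'(A_{u_1})$ — transfers verbatim.

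The one genuinely new case is item 5: a $(8^-,8^-,8^-,\Delta^-,\Delta^-)$-vertex $v$ with neighbors $u_1,u_2,u_3$ of degree at most $8$ and $u_4,u_5$ of degree at most $\Delta$. I delete $v$ and record the worst-case counts. Each tight weak incidence $(u_j,u_jv)$, $j\in\{4,5\}$, has at least $(k+6)-(\Delta+5)\ge 1$ free color; each flexible weak incidence $(u_i,u_iv)$, $i\in\{1,2,3\}$, has at least $(k+6)-13=k-7\ge 5$; and each strong incidence $(v,vu_i)$ has the six candidates $\phi'(A_{u_i})$. The relevant adjacencies are that the five strong incidences of $v$ form a clique, each of them is adjacent to every weak incidence $(u_j,u_jv)$, yet the five weak incidences are pairwise non-adjacent; in particular $(v,vu_i)$ is \emph{not} adjacent to the remaining weak incidences of $u_i$, so its candidate set is genuinely $\phi'(A_{u_i})$.

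I would color in three stages. First set $\phi(u_4,u_4v)=a$ and $\phi(u_5,u_5v)=b$ (legal since no strong incidence of $v$ is yet colored). Second, color the clique of five strong incidences by extracting a system of distinct representatives from the lists $L_i=\phi'(A_{u_i})\setminus\{a,b\}$. This is the crux: because $a$ was chosen outside $\phi'(A_{u_4})$ and $b$ outside $\phi'(A_{u_5})$, we have $|L_4|,|L_5|\ge 5$ while $|L_1|,|L_2|,|L_3|\ge 4$; every subset of at least four of these lists contains one of $L_4,L_5$, so Hall's condition holds and an SDR exists. This single observation replaces the explicit recoloring case analysis of the four-neighbor lemmas. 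Third, each flexible weak incidence $(u_i,u_iv)$ now needs only to avoid the five strong colors and its external set; as its own representative $c_i\in\phi'(A_{u_i})$ was already counted, at least $(k+6)-13-4=k-11\ge 1$ colors remain, and since the weak incidences are mutually non-adjacent they are colored independently. The condition $|\phi(A_v)|\le 6$ is automatic, $v$ having only five neighbors. The main obstacle is exactly the second stage — five mutually adjacent incidences drawn from six-element lists after two colors are forbidden, where the naive count is tight — and the asymmetry $a\notin\phi'(A_{u_4})$, $b\notin\phi'(A_{u_5})$ is precisely what secures Hall's condition.
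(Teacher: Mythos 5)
Your proposal is correct and follows essentially the same route as the paper: the same deletions, the same free-color counts, and for the new $(8^-,8^-,8^-,\Delta^-,\Delta^-)$-vertex case the same three-stage extension (tight weak incidences first, then the five strong incidences of $v$, then the flexible weak incidences), hinging on exactly the observation that $a\notin\phi'(A_{u_4})$ and $b\notin\phi'(A_{u_5})$. The only difference is cosmetic: where you invoke Hall's theorem to extract a system of distinct representatives from the lists $\phi'(A_{u_i})\setminus\{a,b\}$, the paper colors the strong incidences greedily in the order $u_1,u_2,u_3,u_4,u_5$, the successive counts $4,3,2,2,1$ working out precisely because of that same asymmetry.
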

\begin{proof}
\begin{enumerate}
\item[1-3.] By using the same method as in the proof of Theorem \ref{SHAmad}.\ref{T1} and Theorem \ref{SHAmad}.\ref{T3}, it is easy to prove the three first item of Lemma \ref{lemma5}. 

\item[4.] Suppose $H$ contains a $((\Delta-1)^-,\Delta^-,\Delta^-,\Delta^-)$-vertex $v$. We proceed as in the proof of Theorem \ref{SHAmad}.\ref{T4}, Lemma \ref{lemma4}.\ref{4v4}.


\item[5.] Suppose $H$ contains a $(8^-,8^-,8^-,\Delta^-,\Delta^-)$-vertex $v$. Let $u_1$, $u_2$, $u_3$ be the $3$ neighbors of $v$ in $H$ having a degree equal to 8, let $u_4$ and $u_5$ be the two neighbors of $v$ in $H$ such that $d(u_4)\leq \Delta$, $d(u_5)\leq \Delta$. Consider $H'=H-\{v\}$. By minimality of $H$, $H'$ admits an incidence $(k+6,6)$-coloring $\phi'$. We will extend $\phi'$ to an incidence $(k+6,6)$-coloring $\phi$ of $H$ as follows.\\
By using the same computation as above we have: 
\begin{enumerate}
\item For each incidence $(u_i,u_iv)$,  $i\in\{1,2,3\}$ 
$$\mid F_H^{\phi'}(u_i,u_iv)\mid = \mid\phi'(I_{u_i})\cup\phi'(A_{u_i})\cup\phi'(I_v)\mid\leq 7+6+0=13$$
It implies that we have at least $5$ free colors for each incidence $(u_i,u_iv)$,  $i\in\{1,2,3\}$.
\item for each $(u_i,u_iv)$, $i\in\{4,5\}$, there is at least one free color.
\item at least  $6$ free colors for $(v,vu_i)$, $i\in\{1,2,3,4,5\}$  ($\phi'(A_{u_i})$, $i\in\{1,2,3,4,5\}$).
\end{enumerate}
We extend the incidence coloring as follows:
\begin{enumerate}
\item First we color the incidence $(u_i,u_iv)$, $i\in\{4,5\}$ with the free color, we set $\phi(u_4,u_4v)=a$ and $\phi(u_5,u_5v)=b$.
\item We color $(v,vu_1)$ with a color $\phi(v,vu_1)$ different from $a$ and $b$.
\item We color $(v,vu_2)$ with a color $\phi(v,vu_2)$ different from $a$, $b$ and $\phi(v,vu_1)$.
\item We color $(v,vu_3)$ with a color $\phi(v,vu_3)$ different from $a$, $b$, $\phi(v,vu_1)$ and $\phi(v,vu_2)$.
\item We color $(v,vu_4)$ with a color $\phi(v,vu_4)$ different from  $b$, $\phi(v,vu_1)$, $\phi(v,vu_2)$ and  $\phi(v,vu_3)$.
\item We color $(v,vu_5)$ with a color $\phi(v,vu_5)$ different from  $a$, $\phi(v,vu_1)$, $\phi(v,vu_2)$, $\phi(v,vu_3)$ and  $\phi(v,vu_4)$.
\item We color each $(u_i,u_iv)$,  $i\in\{1,2,3\}$ with a color different from the $4$ new forbidden colors incident to $v$.
\end{enumerate}
We have extended the coloring to $H$, a contradiction. This completes the proof.
\end{enumerate}
\end{proof}
\subsection{Discharging procedure}
We define the weight function $\omega: V(H) \rightarrow \mathbb{R}$ with $\omega(x)=d(x)-6$. It follows from the hypothesis on the maximum average degree that the total sum of weights is strictly negative. In the next step, we define a discharging rule (R) and we redistribute weights and once the discharging is finished, a new weight function $\omega^\ast$ will be produced. During the discharging process the total sum of weights is kept fixed. Nevertheless, we can show that $\omega^\ast(x) \ge 0$ for all $x\,\in\,V(H)$. This leads to the following contradiction:
$$0\; \le \sum_{x\,\in\,V(H)}\;\omega^\ast(x)\;=\;\sum_{x\,\in\,V(H)}\;\omega(x)\;<\;0$$
and hence, this counterexample cannot exist.\\
We recall that we have : 
\begin{itemize}
\item $d(v)=4$, $\omega(v)=-2$
\item $d(v)=5$, $\omega(v)=-1$
\end{itemize}
The discharging rule is defined as follows:

\begin{enumerate}
\item[(R)] Every $k$-vertex ($k\geq 7$) gives $\frac{k-6}{k}$ to each of its neighbors having a degree equal to 4 or 5.
\end{enumerate}

We have to notice that a $k$-vertex ($k\geq 7$) does not give more that it has.
\noindent
Let $v\,\in\,V(H)$ be a $k$-vertex. By Lemma~\ref{lemma5}.\ref{1v5}, Lemma~\ref{lemma5}.\ref{2v5} and Lemma~\ref{lemma5}.\ref{3v5}, $k \ge 4$. Consider the following cases:

\begin{enumerate}
\item[] {\bf Case $\boldsymbol{k=4.}$} Observe that $\omega(v)=-2$. 
By Lemma~\ref{lemma5}.\ref{4v5}, $v$ has four  neighbors of degree equal to $\Delta \geq 12$. Hence we have:
$\omega^\ast(v)\geq -2+ 4\times\frac{1}{2}= 0$.
\item[] {\bf Case $\boldsymbol{k= 5.}$} $\omega^\ast(v)\geq 0$.
Observe that $\omega(v)=-1$. 
By Lemma~\ref{lemma5}.\ref{58v5}, $v$ has at least 3  neighbors of degree equal to $9$. Hence we have:
$\omega^\ast(v)\geq -1+ 3\times\frac{1}{3}= 0$.
\item[] {\bf Case $\boldsymbol{k\geq 6.}$} The vertex $v$ satisfies $\omega^\ast(v) \geq k-6-k\times\frac{k-6}{k}\geq 0$.
\end{enumerate}

After performing the discharging procedure the new weights of all vertices are positive and therefore, $H$ cannot exist. This completes the proof of Theorem \ref{SHAmad}.\ref{T5}.

\section{Proof of Theorem \ref{SHAmad}.\ref{T6}}
\subsection{Structural properties}
We proceed by contradiction. Let $H$ be a counterexample to Theorem \ref{SHAmad}.\ref{T6} that minimizes $|E(H)|+|V(H)|$. By hypothesis there exists $k\geq \max\{\Delta(G),8\}$ such that $H$ does not admit an incidence $(k+7,7)$-coloring. Let $k\geq \max\{\Delta(G),8\}$ be the smallest integer such that $H$ does not admit an incidence $(k+7,7)$-coloring. By using Remark \ref{monotone} we must have $k= \max\{\Delta(G),8\}$. By minimality it is easy to see that $H$ is connected.\\
$H$ satisfies the following properties:
\begin{lemma}\label{lemma6}
$H$ does not contain:
\begin{enumerate}
\item \label{1v6} a $1$-vertex,
\item \label{2v6} a $2$-vertex,
\item \label{3v6} a $3$-vertex,
\item \label{4v6} a $4$-vertex,
\item \label{5v6} a $5$-vertex adjacent to a $6^-$-vertex,
\item \label{58v6} a $(5,5,5,5,5,5,\Delta^-)$-vertex.
\end{enumerate}
\end{lemma}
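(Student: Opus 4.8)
The plan is to follow the exact template established in the proofs of Lemmas~\ref{lemma} through~\ref{lemma5}: for each forbidden configuration, assume by contradiction that $H$ contains it, delete a well-chosen vertex or edge to obtain a graph $H'$ with $\mathrm{mad}(H')\le\mathrm{mad}(H)<6$, invoke minimality to obtain an incidence $(k+7,7)$-coloring $\phi'$ of $H'$, and then extend $\phi'$ to all of $H$. The key counting fact I would use repeatedly is that for an incidence $(k+7,7)$-coloring, the strong incidences of a vertex of degree $d$ forbid at most $d-1$ colors, the weak incidences forbid at most $7$ colors, and an incidence $(v,vu)$ to be colored has $\phi'(A_u)$ (at most $7$ colors) as its candidate palette, while $|F_H^{\phi'}(u,uv)|\le |\phi'(I_u)|+|\phi'(A_u)|+|\phi'(I_v)|$. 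With $k\ge 8$ and $k+7$ colors available, deleting a low-degree vertex leaves a comfortable surplus for most incidences.

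For the routine items I expect the following deletions to work. For a $1$-, $2$-, $3$-, or $4$-vertex $v$ (items~\ref{1v6}--\ref{4v6}), I would set $H'=H-\{v\}$; each strong incidence $(u_i,u_iv)$ at a neighbor has at most $d(u_i)-1+7\le\Delta-1+7$ forbidden colors, leaving at least $k+7-(\Delta+6)\ge 1$ free color, and each $(v,vu_i)$ draws from the $7$-element palette $\phi'(A_{u_i})$, which for $d(v)\le 4$ leaves enough room to color the at most four incidences of the form $(v,vu_i)$ greedily while keeping $|\phi(A_{u_j})|\le 7$. For item~\ref{5v6}, a $5$-vertex $v$ adjacent to a $6^-$-vertex $u$, I would delete the edge $uv$; the two incidences $(u,uv)$ and $(v,vu)$ then have at most $3{+}7{+}7$-type bounds that are safely below $k+7$, so both extend. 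These mirror, respectively, the arguments for Lemma~\ref{lemma3}.\ref{3v3}--\ref{4v3} and Lemma~\ref{lemma4}.\ref{4v4}.

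The main obstacle, as in the earlier lemmas, will be the last configuration, item~\ref{58v6}: the $(5,5,5,5,5,5,\Delta^-)$-vertex $v$. Here I delete $v$ and must color seven incidences $(v,vu_i)$ together with seven strong incidences $(u_i,u_iv)$. The six degree-$5$ neighbors each give a $7$-element palette $\phi'(A_{u_i})$ for $(v,vu_i)$ and leave at least $k+7-(4+7)=k-4\ge 4$ free colors for $(u_i,u_iv)$; the $\Delta^-$-neighbor $u_7$ leaves at least one free color for $(u_7,u_7v)$. The plan is to color $(u_7,u_7v)$ first, then color the seven incidences $(v,vu_i)$ in sequence so that they are pairwise distinct and avoid the fixed color on $(u_7,u_7v)$ — this is feasible because each $(v,vu_i)$ has a $7$-element palette while at most six previously-chosen colors plus the neighbour constraint must be dodged — and finally color the six strong incidences $(u_i,u_iv)$ at the degree-$5$ vertices, each of which must merely avoid the few new colors now appearing on incidences at $v$. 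I anticipate that, just as in Case~5 of Lemma~\ref{lemma2}.\ref{3v2} and Lemma~\ref{lemma5}.\ref{4v5}, a naive greedy order may stall and a short recoloring/exchange argument (swapping the color of some $(v,vu_i)$ using an unused palette color) will be needed to close the final incidence; verifying that the surplus $k-4\ge 4$ is genuinely enough to guarantee such an exchange is the delicate point, and the rest of the case analysis should then follow the established pattern.
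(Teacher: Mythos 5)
Your deletions and your coloring order for the hard configuration (item~\ref{58v6}) coincide exactly with the paper's: delete $v$, color $(u_7,u_7v)$ first, then the seven incidences $(v,vu_i)$ in sequence, then the six strong incidences at the degree-$5$ neighbors. But there is a genuine gap, and it sits precisely at the step you flag and defer. You bound the forbidden set of $(u_i,u_iv)$ by $4+7$, i.e.\ you use the generic bound $\ell=7$ for $|\phi'(A_{u_i})|$, and so obtain only $k+7-11=k-4\ge 4$ free colors; that is indeed not enough to dodge the up to six or seven new colors placed on the strong incidences of $v$, so you appeal to an unspecified ``recoloring/exchange argument'' that you never supply. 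No such argument is needed, and the missing idea is a sharper count: in $H'=H-\{v\}$ the vertex $u_i$ has degree $4$, so it has at most $4$ weak incidences and hence $|\phi'(A_{u_i})|\le 4$, not $7$. Therefore $|F_H^{\phi'}(u_i,u_iv)|\le 4+4+0=8$, giving at least $k+7-8\ge 7$ free colors for $(u_i,u_iv)$. Since $(v,vu_i)$ is drawn from the palette attached to $u_i$ (the set $\phi'(A_{u_i})$, completed as in the paper's Remark), among the seven colors now appearing on strong incidences of $v$ at most six are genuinely new constraints for $(u_i,u_iv)$; as $7>6$, a free color remains and the greedy order closes the case. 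This is exactly how the paper finishes item~\ref{58v6}: no exchange, just the degree-based bound on the weak palette.

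The same over-counting undermines your item~\ref{5v6}: you claim ``$3+7+7$-type bounds that are safely below $k+7$'', but $3+7+7=17>15=k+7$ when $k=8$, so as stated the counting fails. The paper again uses degree-based bounds: after deleting the edge $uv$, one has $|\phi'(I_u)|\le 5$, $|\phi'(A_u)|\le 5$ (because $d_{H'}(u)\le 5$) and $|\phi'(I_v)|=4$, so $(u,uv)$ sees at most $14<15$ forbidden colors, and a similar count of at most $14$ handles $(v,vu)$, with the $\ell$-conditions $|\phi(A_u)|,|\phi(A_v)|\le 7$ checked separately and slack. In short: throughout this lemma the correct bound on a weak palette is $\min(d_{H'}(x),\ell)$, and replacing it by $\ell$ alone is what creates both the arithmetic failure in item~\ref{5v6} and the apparent need for an exchange argument in item~\ref{58v6}. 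Your treatment of items~\ref{1v6}--\ref{4v6} is fine and matches the paper.
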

\begin{proof}
\begin{enumerate}
\item[1-3.] By using the same method as in the proof of Theorem \ref{SHAmad}.\ref{T1} and Theorem \ref{SHAmad}.\ref{T3}, it is easy to prove the three first item of Lemma \ref{lemma6}. 

\item[4.] Suppose $H$ contains a 4-vertex $v$ and let $u_1,u_2,u_3,u_4$ be the neighbors of $v$. Consider $H'=H-\{v\}$. By minimality of $H$, $H'$ admits an incidence $(k+7,7)$-coloring $\phi'$. We will extend $\phi'$ to an incidence $(k+7,7)$-coloring $\phi$ of $H$ as follows.\\
Each incidence $(u_i,u_iv)$ $i \in \{1,2,3,4\}$ has one avaible color, we color each incidence $(u_i,u_iv)$ with this available color. Each incidence $(v,vu_i)$ has $7$ available colors ($\phi'(A_{u_i})$). We color each incidence $(v,vu_i)$ one after the other in order to have an incidence coloring. Hence we extend the coloring, a contradition.
\item[5.] Suppose $H$ contains a $5$-vertex $v$ adjacent to a $6^-$-vertex $u$. Consider $H'=H-\{uv\}$. By minimality of $H$, $H'$ admits an incidence $(k+7,7)$-coloring $\phi'$. We will extend $\phi'$ to an incidence $(k+7,7)$-coloring $\phi$ of $H$ as follows.\\
We have:
$$\mid F_H^{\phi'}(u,uv)\mid = \mid\phi'(I_{u})\cup\phi'(A_{u})\cup\phi'(I_v)\mid\leq 5+5+4=14$$
Since we have at least 15 colors, we have at least one available color for $(u,uv)$.
Moreover we allow by hypothesis 7 colors for $\phi(A_{v})$ and $\mid\phi'(A_{v})\mid \leq 4$. Hence we can color $(u,uv)$ with the free color that it has. Since $\mid\phi'(A_{v})\mid\leq 5$, there are at least 1 free colors for the incidence $(v,vu)$.\\
It is then easy to extend the coloring to the whole graph $H$. A contradiction.

\item[6.] Suppose $H$ contains a $(5,5,5,5,5,5,\Delta^-)$-vertex $v$. Let $u_i$, $i \in \{1,2,3,4,5,6\}$ be the $6$ neighbors of $v$ in $H$ having a degree equal to 5, let $u_7$  be the neighbors of $v$ in $H$ such that $d(u_7)\leq \Delta$. Consider $H'=H-\{v\}$. By minimality of $H$, $H'$ admits a $(k+7,7)$-incidence coloring $\phi'$. We will extend $\phi'$ to an incidence $(k+7,7)$-coloring $\phi$ of $H$ as follows.\\
By using the same computation as above we have: 
\begin{enumerate}
\item For each incidence $(u_i,u_iv)$,  $i\in\{1,2,3,4,5,6\}$ 
$$\mid F_H^{\phi'}(u_i,u_iv)\mid = \mid\phi'(I_{u_i})\cup\phi'(A_{u_i})\cup\phi'(I_v)\mid\leq 4+4+0=8$$
It implies that we have at least $7$ free colors for each incidence $(u_i,u_iv)$,  $i\in\{1,2,3,4,5,6\}$.
\item for  $(u_7,u_7v)$, there is at least one free color.
\item at least  $7$ free colors for $(v,vu_i)$, $i\in\{1,2,3,4,5,6,7\}$  ($\phi'(A_{u_i})$, $i\in\{1,2,3,4,5,6,7\}$).
\end{enumerate}
We extend the incidence coloring as follows:
\begin{enumerate}
\item First we color the incidence $(u_7,u_7v)$,  with the free color, we set $\phi(u_7,u_7v)=a$.
\item We color $(v,vu_1)$ with a color $\phi(v,vu_1)$ different from $a$.
\item We color $(v,vu_2)$ with a color $\phi(v,vu_2)$ different from $a$, $\phi(v,vu_1)$.
\item We color $(v,vu_3)$ with a color $\phi(v,vu_3)$ different from $a$, $\phi(v,vu_1)$ and $\phi(v,vu_2)$.
\item We color $(v,vu_4)$ with a color $\phi(v,vu_4)$ different from  $a$, $\phi(v,vu_1)$, $\phi(v,vu_2)$ and  $\phi(v,vu_3)$.
\item We color $(v,vu_5)$ with a color $\phi(v,vu_5)$ different from  $a$, $\phi(v,vu_1)$, $\phi(v,vu_2)$, $\phi(v,vu_3)$ and  $\phi(v,vu_4)$.

\item We color $(v,vu_6)$ with a color $\phi(v,vu_6)$ different from  $a$, $\phi(v,vu_1)$, $\phi(v,vu_2)$, $\phi(v,vu_3)$ and  $\phi(v,vu_4)$ and $\phi(v,vu_5)$.
\item We color $(v,vu_7)$ with a color $\phi(v,vu_7)$ different from $\phi(v,vu_1)$, $\phi(v,vu_2)$, $\phi(v,vu_3)$ and  $\phi(v,vu_4)$ and $\phi(v,vu_5)$ and $\phi(v,vu_6)$.

\item We color each $(u_i,u_iv)$,  $i\in\{1,2,3,5,6\}$ with a color different from the $6$ new forbidden colors incident to $v$.
\end{enumerate}
We have extended the coloring to $H$, a contradiction. This completes the proof.

\end{enumerate}
\end{proof}
\subsection{Discharging procedure}
We define the weight function $\omega: V(H) \rightarrow \mathbb{R}$ with $\omega(x)=d(x)-6$. It follows from the hypothesis on the maximum average degree that the total sum of weights is strictly negative. In the next step, we define a discharging rules (R1), (R2) and we redistribute weights and once the discharging is finished, a new weight function $\omega^\ast$ will be produced. During the discharging process the total sum of weights is kept fixed. Nevertheless, we can show that $\omega^\ast(x) \ge 0$ for all $x\,\in\,V(H)$. This leads to the following contradiction:
$$0\; \le \sum_{x\,\in\,V(H)}\;\omega^\ast(x)\;=\;\sum_{x\,\in\,V(H)}\;\omega(x)\;<\;0$$
and hence, this counterexample cannot exist.\\
We recall that we have only one negative weight: $d(v)=5$, $\omega(v)=-1$.\\

The discharging rules are defined as follows:

\begin{enumerate}
\item[(R1)] Every $7$-vertex  gives $\frac{1}{5}$ to each of its neighbors having a degree equal to  $5$.
\item[(R2)] Every $k$-vertex ($k\geq 8$) gives $\frac{k-6}{k}$ to each of its neighbors having a degree equal to  $5$.
\end{enumerate}

We have to notice that a $k$-vertex ($k\geq 8$) does not give more that it has. Moreover a $7$-vertex has at most $5$ neighbors of degree 5.
\noindent
Let $v\,\in\,V(H)$ be a $k$-vertex. By Lemma~\ref{lemma6}.\ref{1v6}, Lemma~\ref{lemma6}.\ref{2v6}, Lemma~\ref{lemma6}.\ref{3v6} and  Lemma~\ref{lemma6}.\ref{4v6} we have   $k \ge 5$. Consider the following cases:

\begin{enumerate}
\item[] {\bf Case $\boldsymbol{k=5.}$} Observe that $\omega(v)=-1$. 
By Lemma~\ref{lemma6}.\ref{5v6}, $v$ has 5  neighbors of degree greater or equal to 7. Hence we have:
$\omega^\ast(v)\geq -1+ 5\times\frac{1}{5}= 0$.
\item[] {\bf Case $\boldsymbol{k= 6.}$} $\omega(v)=0$, $v$ has weight 0 and gives nothing. $\omega^\ast(v)=0$,

\item[] {\bf Case $\boldsymbol{k= 7.}$} $\omega(v)=1$ and
$\omega^\ast(v)\geq 1 -5\times\frac{1}{5}=0$, $v$ has at most 5 neighbors of degree 5  by Lemma~\ref{lemma6}.\ref{58v6}.
\item[] {\bf Case $\boldsymbol{k\geq 8.}$} $v$ does not give more that it has.
$\omega^\ast(v)\geq 0$.
\end{enumerate}

After performing the discharging procedure the new weights of all vertices are positive and therefore, $H$ cannot exist. Hence we have proved the first sentance of Theorem \ref{SHAmad}.\ref{T6}. Now by Theorem \ref{borne} if $\Delta \leq 7$, we have $\chi_i(G)\leq 2\Delta \leq \Delta +7$. This completes the proof of Theorem \ref{SHAmad}.\ref{T6}.

\begin{remark}
{\rm By using Theorem \ref{borne} or the result of \cite{cubic1}, the result of Theorem \ref{SHAmad}.\ref{T1} is true for $\Delta \leq 3$. More precisely every graph with $\Delta(G)\leq 3$, admits an incidence $(\Delta(G)+3,3)$-coloring.
The question remains open for graphs with maximal degree $4$, $5$ or $6$ and a maximal average degree less than $4$. A similar question can be asked for the other results of Theorem \ref{SHAmad}}.
\end{remark}


\end{document}